\newtheorem{theorem}{Theorem}
\newtheorem{corollary}{Corollary}
\newtheorem{lemma}[theorem]{Lemma}
\newtheorem{proposition}[theorem]{Proposition}
\providecommand{\definitionname}{}
\NewDocumentEnvironment{definition}{o}
 {  \IfNoValueTF{#1}
    {\innerdefinition}    {     \renewcommand{\definitionname}{#1}     \def\@currentlabel{#1}     \innerdefinitionnonumber    } }
 {\IfNoValueTF{#1}{\endinnerdefinition}{\endinnerdefinitionnonumber}}
\theoremstyle{definition}
\newtheorem{example}{Example}
\title{Sequential Elimination and Union Shapley Value for Group Assessment in Coalitional Games}
\author{Piotr Kępczyński}
\author{Oskar Skibski}
\affil{Institute of Informatics, University of Warsaw}
\begin{document}

\maketitle

\begin{abstract}
Two straightforward methods to extend an assessment of individual elements to groups are to sum individual assessments or to treat the group as a single merged element and assess it accordingly.
In this work, we analyze another natural approach based on sequential elimination: elements of the group are removed one by one, and their assessments are aggregated.
We study this approach in the context of coalitional games and show that, for almost all semivalues, it does not depend on the order of players.
In particular, we introduce a new group value, called the \emph{Union Shapley Value}, and investigate its axiomatic properties.

Our results build on a comprehensive analysis of group values in coalitional games.
Specifically, we define a class of \emph{group (weak consistent) semivalues}---a variant of semivalues satisfying a weak form of monotonicity.
This framework allows us to clarify the differences between existing notions in the literature.
We show that existing group values either assess the total worth of a group or measure its synergy.
We distinguish these two approaches axiomatically and uncover a connection between the corresponding values.
In particular, we show that the well-known Interaction Index \cite{Grabisch:Roubens:1999} is a synergistic counterpart of the value introduced by \citet{Marichal:etal:2007}, which corresponds to the merge approach.
The analysis also yields new synergistic group values associated with the Union Shapley Value, which we call the \emph{Intersection Shapley Value}.
Our results demonstrate that the sequential extension---and the Union Shapley value in particular---constitute one of the most natural extensions of player values to groups in coalitional games.
\end{abstract}

\section{Introduction}\label{section:introduction}

The topic of element assessment appears in a wide variety of areas, including cooperative games~\cite{Shapley:Shubik:1954}, centrality analysis in networks~\cite{Ballester:etal:2006}, and data valuation in machine learning~\cite{Ghorbani:Zou:2019}.
In many scenarios, it becomes essential to assess not only individual elements, but also pairs or groups of elements.
Elements in a group may be independent, but they may also be highly correlated.
As in the case of individual elements, the assessment of groups should not only focus on group performance, but also on the interaction and interplay with the rest of the system.
This is, for example, evident when large machine learning models are considered, where a few features are unable to provide a reasonable prediction on their own~\cite{Lundberg:Lee:2017}.
The central question of this paper is: how should the value of a group of players be defined when individual values are known?

Two straightforward methods to extend an assessment of individual elements to groups are to sum individual assessments or to treat the group as a single merged element and assess it accordingly.
Clearly, taking the sum of individual assessments ignores the underlying structure of cooperation.
In turn, merging alters the analyzed structure and ignores interactions within the group.
As a result, both approaches often fail to accurately capture dependencies within groups.

In this work, we analyze another natural approach based on sequential elimination: elements of a group are removed one by one, and their assessments are aggregated.
Sequential elimination naturally measures the net contribution of a group by discounting overlapping effects and focusing on the damage caused by the group's absence.
This approach is used, for example, as a heuristic method for selecting an optimal group, where the best element is chosen at each step of the process~\cite{Darst:etal:2018,Albert:etal:2000b,Holme:etal:2002}.

We study this approach in coalitional games.
Here, several group values have been proposed in the literature, and the topic has recently received attention in the machine learning literature~\cite{Muschalik:etal:2024,Harris:etal:2022,Chen:etal:2024}.
Arguably, the most important ones are the Interaction Index~\cite{Grabisch:Roubens:1999} and an extension obtained by merging the group into a single element, which we call the Merge Shapley value; it has been proposed in several papers under different names~\cite{Marichal:etal:2007,Flores:etal:2014,Chen:etal:2024}.
However, the outcomes of these extensions may be considered undesirable, as we now demonstrate.

\begin{example}\label{example:intro}
Consider a game in which two players, $A$ and $B$, when they cooperate, create one unit of a good, while two other players, $C$ and $D$, create another unit. No other cooperation exists.
Which pair, $\{A,B\}$ or $\{A,C\}$, is more important?

According to the Shapley value, one of the most important solution concepts, all players are equally important; thus, each pair has the same sum of individual Shapley values.
The same result is obtained by merging the groups: the merged coalition $\{A,B\}$ creates one unit of the good, while the merged coalition $\{A,C\}$ contributes one half to each of the two units.
As a result, both approaches ignore the different roles of the two pairs in the game.
On the other hand, the Interaction Index assigns value zero to the coalition $\{A,C\}$, which does not fairly represent their joint importance, but rather the lack of synergy between them.

From a systemic perspective, removing $\{A,C\}$ eliminates two independent sources of value, whereas removing $\{A,B\}$ removes only one; this asymmetry is precisely captured by the sequential elimination approach.
Moreover, although players $A$ and $C$ are independent, players $A$ and $B$ are not; consequently, the overlap in their contributions should reduce their combined value.
See \Cref{example:12_34} for a formal description.
\hfill $\lrcorner$
\end{example}

Our results demonstrate that extension through sequential elimination constitutes one of the most theoretically grounded extensions of player values to groups in coalitional games.
We show that any consistent semivalue (i.e., a semivalue satisfying the Null Player Out axiom) can be extended to groups using this approach, since sequential elimination does not depend on the order of players.
Moreover, group values obtained through sequential elimination have strong axiomatic properties: they are uniquely characterized by the group extensions of the Balanced Contributions property~\cite{Myerson:1980} and the Potential axiom~\cite{Hart:Mas-Colell:1989}.

Building upon this approach, we propose the \emph{Union Shapley value} as the group extension of the Shapley value.
The Union Shapley value, formally defined as the sum of the dividend shares of coalitions containing at least one member of the group, can also be interpreted as the effect that the removal of this group has on the game.
Hence, among groups of a fixed size, the group with the highest value upon removal damages the game the most.
To achieve this, each coalition's contribution is counted only once, even if it contains multiple members of the group.

The proposed approach is accompanied by a comprehensive analysis of group values.
Specifically, we identify two variants of group values: a group value either assesses the total worth of a group or measures its synergy.
To illustrate the difference, consider an additive game in which the value of every coalition is the sum of the values of singleton coalitions.
If a group value assesses total worth, then the group assessment should be the sum of individual assessments.
In turn, if a group value measures synergy, then clearly no synergy occurs, and each group of size greater than one should receive zero.

We differentiate between both variants through axiomatic characterizations.
Specifically, we first characterize the class of group semivalues using classic axioms, but with a weaker version of monotonicity which, as we argue, is more natural in the context of groups.
Group semivalues include values obtained by summing, merging, and sequential elimination of the Shapley value.
Then, we show that replacing the axiom of Dummy Player with Dummifying Player, introduced by~\citet{Grabisch:Roubens:1999}, leads to the corresponding class of synergy semivalues.
This class includes, for example, the well-known Interaction Index.

Through the functional form of group values, we establish a correspondence between group semivalues and synergy semivalues.
In particular, we show that the Merge Shapley value corresponds to the Interaction Index.
Moreover, we define a synergy semivalue corresponding to the Union Shapley value under the name \emph{Intersection Shapley value}.
We show that both values can serve as natural measures of interaction between pairs of players: when a group of size two is considered, they both sum to the sum of the players' Shapley values; hence, the Union Shapley value measures the \emph{union} of their contributions, while the Intersection Shapley value measures their \emph{intersection}.

\cref{table:summary} summarizes our main axiomatic results.

\textbf{Related work.}
The first definition of a group value was the Interaction Index, proposed by \citet{Grabisch:Roubens:1999}.
The Interaction Index takes into account the contribution of a group to all coalitions, but focuses not on their value, but on their synergy.
See the discussion in \cref{section:synergies} for details.
A variant of this index, under the name Shapley-Taylor Interaction Index, was also proposed by \citet{sundararajan:etal:2020}.

A group value---that we will call the Merge Shapley value---was first proposed by \citet{Marichal:etal:2007} under the name generalized Shapley value.
The idea here is that the value of a group is equal to the value of a merged player in a game where the entire group is treated as a single player.
The authors also proposed a definition of a class of semivalues that violates the null player property; that is, adding a null player to a group may change its value.
We will discuss this in \cref{section:semivalues}.
The Merge Shapley value has also been proposed in several other papers under various names, such as GShap \cite{Chen:etal:2024} and Shapley group value \cite{Flores:etal:2014}, and has been applied to interpret risk detection models~\cite{lin:etal:2022} and to identify critical paths in neural networks~\cite{khakzar:etal:2021}.

Several other extensions of the Shapley value have also been proposed.
\citet{Harris:etal:2022} and \citet{sundararajan:etal:2020} introduced extensions based on an additional parameter that specifies the \emph{order of explanation} defined as the maximal size of considered groups.
\citet{Alshebli:etal:2019} proposed a group value that is not consistent with the Shapley value on singleton coalitions.
In a recent work, \citet{Muschalik:etal:2024} created a library that computes several group values in an application-agnostic framework.

\begin{table}[t]
\begin{tabular}{lcccccc}
\toprule
& \multicolumn{3}{c}{group semivalues} & \multicolumn{3}{c}{synergy semivalues}\\
\toprule
& Union & sum of & Merge & Intersection & Intersection & Interaction \\
& SV & SVs & SV & SV & SV $\cdot |S|$ & Index \\
\midrule 
Linearity & \checkmark & \checkmark & \checkmark & \checkmark & \checkmark & (\checkmark) \\ 
Symmetry & \checkmark & \checkmark & \checkmark & \checkmark & \checkmark & (\checkmark) \\ 
Weak Monotonicity & \checkmark & \checkmark & \checkmark & \checkmark & \checkmark & (\checkmark) \\
Null Player Out & \checkmark & \checkmark & \checkmark & \checkmark & \checkmark & (\checkmark) \\
Dummy Player & \checkmark & \checkmark & \checkmark & & & \\ 
Dummifying Player & & & & \checkmark & \checkmark & (\checkmark) \\ 
\midrule 
Singleton-Efficiency & \checkmark & \checkmark & \checkmark & \checkmark & \checkmark & (\checkmark) \\ 
Group Equality & \checkmark & & & \checkmark & \\
Group Proportionality & & \checkmark & & & \checkmark & \\
\midrule
Dummy Coalition & & & \checkmark & & & \\ 
Potential & (\checkmark) & & & & \\
Balanced Contributions & (\checkmark) & & & & \\
\bottomrule
\end{tabular}\vspace{0.1cm}
\caption{Summary of the main axiomatic characterizations. The symbol \checkmark\ denotes axioms used in the axiomatization of the value, while (\checkmark) means that the axiom is satisfied but not used in the axiomatization.
Group semivalues are defined using Linearity, Symmetry, Weak Monotonicity, Null Player Out, and Dummy Player~(\cref{theorem:semivalues}), while synergy semivalues are defined with Dummifying Player instead of Dummy Player~(\cref{theorem:synergistic_semivalues}).
Adding Singleton-Efficiency implies that a group value is an extension of the Shapley value~(\cref{proposition:efficiency_equivalence}). Group Equality uniquely characterizes the Union Shapley value among group semivalues~(\cref{theorem:axioms_union}) and the Intersection Shapley value among synergy semivalues~(\cref{proposition:axioms_intersection}) that extend the Shapley value.
In turn, Group Proportionality uniquely characterizes the sum of Shapley values among group semivalues~(\cref{theorem:axioms_sum}) and the Intersection Shapley value times the size of the coalition among synergy semivalues~(\cref{proposition:axioms_intersection_s}).
Finally, the Merge Shapley value is uniquely defined within group semivalues through Dummy Coalition~(\cref{proposition:merge_axioms}).
An alternative axiomatization of the Union Shapley value is obtained by combining Potential or Balanced Contributions with Singleton-Efficiency~(\cref{proposition:union_potential_axiom}).}
\label{table:summary}
\end{table}

%%%%%%%%%%%%%%%%%%%%%%%%%%%%%%%%%%%%%%%%%%%%%%
%%%%%%%%%%%%%%%%%%% PRELIMINARIES %%%%%%%%%%%%
%%%%%%%%%%%%%%%%%%%%%%%%%%%%%%%%%%%%%%%%%%%%%%
\section{Preliminaries}\label{section:preliminaries}
A \emph{game} is a pair $(N, v)$ such that $N$ is the set of $n$ players and $v:2^N\rightarrow\mathbb{R}$ is a characteristic function that assigns every coalition some value $v(S)$ (assuming $v(\emptyset)=0$), called the worth of coalition $S$. 
We will denote by $\mathcal{G}$ the set of all games and by $[k]$ the set $\{1,\dots,k\}$ for $k \in \mathbb{N}$.
A game is \emph{monotone} if the value of a larger coalition is not smaller than the value of a smaller one: for every $S \subseteq T \subseteq N$ it holds $v(S) \le v(T)$.

A special class of games is formed by \emph{unanimity games}. For a coalition $S \subseteq N$, a unanimity game $(N, u_S)$ is a game in which all players in $S$ are required for a coalition to obtain the non-zero value of~$1$:
\[
u_S(T)=
\begin{cases}
    1 & \text{if } S\subseteq T, \\
    0 & \text{otherwise.}
\end{cases}
\]
Unanimity games form a basis of all games. Specifically, for every game $(N, v)$ it holds that
\[ 
v=\sum_{S\subseteq N, S\not=\emptyset}\Delta_v(S)\cdot u_S, \quad \text{ where } \quad \Delta_v(S)=\sum_{T\subseteq S}(-1)^{|S|-|T|} v(T).
\] 
The weights $\Delta_v(S)$ are called the \emph{(Harsanyi) dividends} \cite{Harsanyi:1963}.
The notion of dividends will play a central role in our paper.
A dividend represents the additional value created by a coalition that exceeds what can be obtained by its subcoalitions.
This follows from their recursive definition:
\[ \Delta_v(S) = v(S) - \sum_{T \subsetneq S} \Delta_v(T) \quad \text{ for every } \quad S \subseteq N, |S| \ge 2, \]
with the boundary condition: $\Delta_v(\{i\}) = v(\{i\})$ for every $i \in N$.
Importantly, the sum of dividends of all coalitions is equal to the worth of the grand coalition: $\sum_{T \subseteq N} \Delta_v(T) = v(N)$.
If all dividends are non-negative, a game $(N,v)$ is called \emph{positive}.
For example, game $(\{1,2\}, v)$ with $v(\{1\})=v(\{2\})=v(\{1,2\})=1$ is monotone, but it is not positive.

A player $i \in N$ is a \emph{null player} if she does not contribute to any coalition, i.e., $v(S \cup \{i\}) = v(S)$ for every coalition $S \subseteq N \setminus \{i\}$. 
A player $i$ is a \emph{dummy player} if she always contributes the same value: $v(S \cup \{i\}) = v(S) + v(\{i\})$ for every coalition $S \subseteq N \setminus \{i\}$.
It is known that if a coalition contains a dummy player, its dividend is zero, unless it is a singleton coalition.

A \emph{value} of a game is a function $\varphi$ that for every game $(N, v)$ and every player $i\in N$ assigns some real number, denoted by $\varphi_i(N, v)$. 
We will sometimes refer to values as \emph{player values} to distinguish them from group values which are the main focus of this work.

Arguably, one of the most important solution concepts is \emph{the Shapley value}~\cite{Shapley:1953}. 
Let $\Pi(N)$ be the set of all permutations of $N$ and $P^\pi_i=\{j\in N\ |\ \pi(j)<\pi(i)\}$ be the set of predecessors for $i\in N$ and permutation $\pi\in \Pi(N)$.
For a game $(N,v)$, the Shapley value of a player $i$ is defined as follows:
\[
SV_i(N, v)=\frac{1}{|N|!} \sum_{\pi \in \Pi(N)} (v(P^\pi_i \cup \{i\})-v(P^\pi_i)).
\]
Equivalently, the Shapley value is the sum of player's contributions to all coalitions or player's allocated shares of the dividends of all coalitions she belongs to:
\begin{equation*}
SV_i(N, v) = \sum_{T\subseteq N, i\in T}\frac{(|T|-1)!(|N|-|T|)!}{|N|!}(v(T)-v(T\setminus\{i\})) = \sum_{T\subseteq N, i\in T}\dfrac{\Delta_v(T)}{|T|}.
\end{equation*}
The Shapley value is the unique solution that satisfies Efficiency, Symmetry, Additivity and Null-Player.
Axioms pertaining to player values are denoted by a superscript $^\circ$.

\begin{definition}[Efficiency{$^\circ$}]
A (player) value $\varphi$ satisfies \emph{Efficiency} if 
$\sum_{i \in N} \varphi_i(N, v) = v(N)$ for every game $(N,v)$.
\end{definition}
\begin{definition}[Symmetry{$^\circ$}]
A (player) value $\varphi$ satisfies \emph{Symmetry} if $\varphi_i(N, v) = \varphi_{\pi(i)}(N, \pi(v))$ for every game $(N, v)$, every bijection $\pi: N\rightarrow N$ and player $i \in N$.
\end{definition}
\begin{definition}[Additivity{$^\circ$}]
A (player) value $\varphi$ satisfies \emph{Additivity} if 
$\varphi_i(N, v+w)=\varphi_i(N, v)+\varphi_i(N, w)$ for every games $(N,v), (N,w)$ and player $i \in N$.
\end{definition}
\begin{definition}[Null Player{$^\circ$}]
A (player) value $\varphi$ satisfies \emph{Null Player} if $\varphi_i(N,v) = 0$ for every game $(N, v)$ and null player $i\in N$.
\end{definition}
Here, we used the following additional notation.
For arbitrary games $(N,v)$, $(N,w)$, a scalar $c \in \mathbb{R}$ and a bijection $\pi: N \rightarrow N$ we define games $(N,v+w)$, $(N, c\cdot v)$ and $(N, \pi(v))$ as follows: $(v+w)(T) = v(T) + w(T)$, $(c \cdot v)(T) = c \cdot v(T)$ and $(\pi(v))(\pi(T)))=v(T)$ for every $T \subseteq N$, where $\pi(S) = \{\pi(i) : i\in S\}$.

\citet{Myerson:1980} showed that it is possible to replace Symmetry, Additivity and Null Player by one axiom of Balanced Contributions:
\begin{definition}[Balanced Contributions{$^\circ$}]
A (player) value $\varphi$ satisfies \emph{Balanced Contributions} if $\varphi_i(N,v)-\varphi_i(N \setminus \{j\},v) = \varphi_j(N,v) - \varphi_j(N \setminus \{i\}, v)$ for every game $(N, v)$ and two players $i,j \in N$.
\end{definition}

The Shapley value belongs to the class of \emph{semivalues}. 
A value is a semivalue if for every set of players $N$ there exist weights $\beta: [|N|] \rightarrow \mathbb{R}$ satisfying $\sum_{i=1}^n{n \choose i-1}\beta(i)=1$ such that:
\begin{equation}\label{equation:semivalues}
\varphi^{\beta}_i(N, v)=\sum_{T \subseteq N, i \in T} \beta(|T|) (v(T)-v(T\setminus\{i\})).
\end{equation}
\citet{Weber:1988} proved that a value is a semivalue if and only if it satisfies Linearity (a strengthening of Additivity), Symmetry, Dummy Player (a strengthening of Null Player) and Monotonicity:
\begin{definition}[Linearity{$^\circ$}]
A (player) value $\varphi$ satisfies \emph{Linearity} if it satisfies Additivity and $\varphi_i(N, c \cdot v)=c \cdot \varphi_i(N, v)$ for every game $(N,v)$, scalar $c \in \mathbb{R}$ and player $i \in N$.
\end{definition}
\begin{definition}[Dummy Player{$^\circ$}]
A (player) value $\varphi$ satisfies \emph{Dummy Player} if $\varphi_i(N,v) = v(\{i\})$ for every game $(N, v)$ and dummy player $i\in N$.
\end{definition}
\begin{definition}[Monotonicity{$^\circ$}]
A (player) value $\varphi$ satisfies \emph{Monotonicity} if $\varphi_i(N,v) \ge 0$ holds for every monotone game $(N,v)$ and player $i \in N$.
\end{definition}
Another known value that belongs to the class of semivalues is the Banzhaf value, defined through weights $\beta(t) = 1/2^{n-1}$.

According to the definition of semivalues from \cref{equation:semivalues}, weights can depend on the set of players of the game. 
In particular, a value defined as the Shapley value for odd number of players and the Banzhaf value for even number of players is a valid semivalue.
To enforce the consistency between weights, the axiom of Null Player Out is often considered \cite{Derks:Haller:1999}.
\begin{definition}[Null Player Out$^\circ$]
A player value $\varphi$ satisfies \emph{Null Player Out} if and only if $\varphi_i(N,v)=\varphi_i(N\setminus \{j\}, v)$ for every game $(N,v)$, every null player $j\in N$, and every player $i \in N \setminus \{j\}$.
\end{definition}
Null Player Out states that a null player not only receives a zero payoff, but also does not affect the distribution of payoffs among the remaining players.
Specifically, removing a null player from the game does not affect the payoffs of the others.
This axiom is satisfied by all standard values proposed in the literature, including the Shapley value and the Banzhaf value, and it was in fact implied by the original definition of semivalues by \citet{Dubey:etal:1981}.
If a semivalue satisfies Null Player Out, we will call it \emph{consistent}.

\subsection{Group Values}
In this paper, we conduct a systematic study of \emph{group values}.
A \emph{group value} of a game is a function $\varphi$ that for every game $(N, v)$ and every non-empty coalition $S\subseteq N$ assigns some real number, denoted by $\varphi_S(N,v)$.

A group value $\varphi$ \emph{extends} a (player) value $\varphi'$ if $\varphi_{\{i\}}(N,v)=\varphi'_i(N,v)$ for every game $(N, v)$ and player $i\in N$.
A straightforward extension to groups is the sum of individual values.
Such an approach clearly ignores co-dependencies of players. 
That is why different group values have been proposed in the literature.

\citet{Marichal:etal:2007} proposed a group value that we will call the \emph{Merge Shapley value}. For a game $(N,v)$ and coalition $S$ it is defined as follows:
\[
MS_S(N, v)=\sum_{T\subseteq N\setminus S}\dfrac{|T|!(|N|-|T|-|S|)!}{(|N|-|S|+1)!} (v(T\cup S)-v(T)).
\]
According to the Merge Shapley value, the value of a group is equivalent to the Shapley value of a coalition in a game obtained by merging this group into one player. 
Formally, let $(N \setminus S \cup \{[S]\}, v_{[S]})$ denote a game where players from coalition $S$ are merged into player $[S]$: for a game $(N, v)$ and coalitions $S, T\subseteq N$:
\[
v_{[S]}(T)=
\begin{cases}
v(T\setminus\{[S]\} \cup S) & \text{if } [S] \in T, \\
v(T) & \text{otherwise.}
\end{cases}
\]
Now, for a game $(N, v)$, non-empty coalition $S\subseteq N$ it holds:
\[
MS_S(N,v)=SV_{[S]}(N\setminus S\cup\{[S]\}, v_{[S]})
\]

\citet{Marichal:etal:2007} in a similar fashion defined the \emph{generalized semivalues}.
This class corresponds to applying arbitrary semivalues to the game $v_{[S]}$, possibly different for different group sizes.
As a consequence it does not include many natural approaches not based on merging players, such as the sum of individual Shapley values.
We will discuss it in \cref{section:marichal}.

\citet{Grabisch:Roubens:1999} proposed a different group value under the name \emph{Interaction Index}. 
It is defined for game $(N,v)$ and coalition $S$ as follows:
\[
II_S(N, v) = \sum_{T\subseteq N\setminus S}\dfrac{|T|!(|N|-|T|-|S|)!}{(|N|-|S|+1)!}\sum_{R\subseteq S}(-1)^{|S|-|R|} v(T\cup R).
\]
Interaction Index extends the Shapley value with the goal to measure not the worth of a coalition, but the synergy of its members.
As a result, it behaves significantly different than the sum of individual values or the Merge Shapley value.
For example, in an additive game ($v(S) = \sum_{i \in S} v(\{i\})$ for every $S \subseteq N$), both approaches assess each coalition by the sum of singleton values. 
In turn, Interaction Index of every non-singleton coalition is zero. 
Through our analysis we will explain how this concept relates to the two previous ones.

%%%%%%%%%%%%%%%%%%%%%%%%%%%%%%%%%%%%%%%%%%%%%%
%%%%%%%%%%%%%%%%%%% SEQUENTIAL %%%%%%%%%%%%%%%
%%%%%%%%%%%%%%%%%%%%%%%%%%%%%%%%%%%%%%%%%%%%%%
\section{Sequential Extension to Groups}\label{section:union_shapley_value}

So far, we have identified two simple extensions of (player) values to groups.

\begin{definition}[Sum extension]
A group value $\varphi$ is a \emph{sum extension} of a (player) value $\varphi'$ if $\varphi_S(N,v)=\sum_{i\in S}\varphi'_i(N,v)$
for every game $(N,v)$ and every non-empty coalition $S\subseteq N$.
\end{definition}

\begin{definition}[Merge extension]
A group value $\varphi$ is a \emph{merge extension} of a (player) value $\varphi'$ if $\varphi_S(N,v)=\varphi'_{[S]}(N\setminus S\cup\{[S]\},v_{[S]})$ for every game $(N,v)$ and every non-empty coalition $S\subseteq N$.
\end{definition}

Sum extensions simply aggregate individual values, while merge extensions---with the Merge Shapley value as a canonical example---merge players into a single player and then evaluate this merged player.
While the former approach completely ignores dependencies between players, the latter alters the structure of the game and fails to distinguish between independent and highly dependent players.

\begin{example}\label{example:12_34}
Let us formalize the example mentioned in the Introduction.
Fix a game $(N,v)$ with $N = \{1,2,3,4\}$ and $v = u_{\{1,2\}} + u_{\{3,4\}}$.
Consider two coalitions: $\{1,2\}$ and $\{1,3\}$.
According to the Shapley value, all players are equally important, and the sums of Shapley values for both coalitions coincide.
The same conclusion is obtained with the Merge Shapley value:
$MS_{\{1,2\}}(N,v) = MS_{\{1,3\}}(N,v) = 1$.

This is because the game $v$ in which players $\{i,j\}$ are merged into a single player $\oplus$ is equivalent to the game
$(\{\oplus,3,4\},u_{\{\oplus\}}+u_{\{3,4\}})$
for coalition $\{1,2\}$, and to the game
$(\{\oplus,2,4\},u_{\{\oplus,2\}}+u_{\{\oplus,4\}})$
for coalition $\{1,3\}$.
In both cases, the Shapley value of the merged player equals $1$.
\hfill $\lrcorner$
\end{example}

Consider now the idea of sequential elimination: the members of a group are removed one by one, and their values in the resulting games are added.

\begin{definition}[Sequential extension]
A group value $\varphi$ is a \emph{sequential extension} of a (player) value $\varphi'$ if
$\varphi_S(N,v) = \sum_{i\in S}\varphi'_i(N\setminus P^\pi_i, v)$
for every game $(N,v)$, every non-empty coalition $S\subseteq N$, and every permutation $\pi \in \Pi(S)$.
\end{definition}

Note that, for a sequential extension to exist, it is necessary that the order in which players are removed does not affect the resulting sum.
This requirement is equivalent to the Balanced Contributions property, which states that it does not matter whether one first computes the value of player $i$ and then player $j$, or vice versa:
\[
\varphi_i(N,v) + \varphi_j(N \setminus \{i\}, v)
=
\varphi_j(N,v) + \varphi_i(N \setminus \{j\}, v).
\]

\begin{proposition}\label{proposition:sequential_exists}
A (player) value admits a sequential extension if and only if it satisfies Balanced Contributions.
Moreover, this extension is unique.
\end{proposition}

All proofs can be found in the appendix.

As we now show, every consistent semivalue satisfies Balanced Contributions.

\begin{proposition}\label{proposition:semivalues_bc}
Every consistent semivalue satisfies Balanced Contributions.
\end{proposition}

\begin{example}\label{example:12_34_cont1}
Let us continue \cref{example:12_34} and consider the sequential extension $\varphi$ of any player value $\varphi'$ satisfying Balanced Contributions.
We obtain
\begin{align*}
\varphi_{\{1,2\}}(N,v) &= \varphi'_1(N,v) + \varphi'_2(N \setminus \{1\},v), \\
\varphi_{\{1,3\}}(N,v) &= \varphi'_1(N,v) + \varphi'_3(N \setminus \{1\},v).
\end{align*}
Removing player $1$ yields the reduced game
\[
(N \setminus \{1\}, v) = (\{2,3,4\}, u_{\{3,4\}}).
\]
In this game, player $2$ is a null player, while player $3$---as in the original game---jointly generates the worth of $u_{\{3,4\}}$ with player $4$.
Hence, for any reasonable player value $\varphi'$, the group $\{1,2\}$ is assigned a lower value than the group $\{1,3\}$, as contributions of players $1$ and $2$ overlap, while players $1$ and $3$ are independent.
\hfill $\lrcorner$
\end{example}

%%%%%%%%%%%%%%%%%%%%%%%%%%%%%
%%%% UNION SHAPLEY VALUE %%%%
%%%%%%%%%%%%%%%%%%%%%%%%%%%%%
\subsection{Union Shapley value}

To better understand how sequential extensions operate, we now focus on the sequential extension of the Shapley value.
We refer to the unique such extension as the \emph{Union Shapley value}.

The simplest way to define the Union Shapley value is through the notion of a \emph{potential} introduced by \citet{Hart:Mas-Colell:1989}.
A potential of a game is a function that evaluates the game as a whole, rather than individual players or groups.
Consider the following potential function associated with the Shapley value, defined by
\[ P(N,v) = \sum_{T \subseteq N,\, T \neq \emptyset} \frac{\Delta_v(T)}{|T|} = \sum_{T \subseteq N} \frac{(|T|-1)!(|N|-|T|)!}{|N|!}\, v(T). \]
\citet{Hart:Mas-Colell:1989} showed that the Shapley value coincides with the marginal contribution to this function, that is,
\[ SV_i(N,v) = P(N,v) - P(N \setminus \{i\}, v). \]
This implies that, for any order of players $\pi \in \Pi(S)$, it holds that
\[ \sum_{i\in S} SV_i(N\setminus P^\pi_i, v) = P(N,v) - P(N \setminus S,v). \]
From this observation, it follows that the function comparing the potential of the original game with the potential of the game in which the players of a given group have been removed defines a sequential extension.
This construction yields the Union Shapley value.
In other words, the Union Shapley value measures the damage inflicted on the game when a given group leaves.

\begin{definition}[Union Shapley value]
The Union Shapley value of a non-empty coalition $S$ in a game $(N,v)$ is defined by
\begin{equation}\label{equation:union_shapley}
US_S(N,v)
= P(N,v) - P(N \setminus S, v)
= \sum_{T \subseteq N,\, S \cap T \neq \emptyset} \frac{\Delta_v(T)}{|T|}.
\end{equation}
\end{definition}

Simple calculations (see \cref{appendix:calculations} for details) show that the Union Shapley value admits a form close to the classical Shapley value:
\begin{equation*}
US_S(N,v)
= \sum_{T \subseteq N} \frac{(|T|-1)!(|N|-|T|)!}{|N|!}
\bigl(v(T) - v(T \setminus S)\bigr).
\end{equation*}
Similarly to the Shapley value, the Union Shapley value of a group $S$ is expressed as a sum over all coalitions $T \subseteq N$ of marginal contributions $v(T) - v(T \setminus S)$.

The name also admits a natural interpretation in terms of dividends contributing to the value of a group.
Let $\mathcal{S}_i$ denote the set of all coalitions that contain player $i$.
Then,
\[
SV_i(N,v)
= \sum_{T \in \mathcal{S}_i} \frac{\Delta_v(T)}{|T|},
\quad \text{while} \quad
US_S(N,v)
= \sum_{T \in \bigcup_{i \in S} \mathcal{S}_i} \frac{\Delta_v(T)}{|T|}.
\]
Thus, players' contributions to dividends are aggregated, but each coalition contributes at most once.

\begin{example}\label{example:12_34_cont2}
Let us briefly recall \cref{example:12_34}.
The only non-zero dividends are
$\Delta_v(\{1,2\}) = \Delta_v(\{3,4\}) = 1$.
Hence, $US_{\{1,2\}}(N,v) = \frac{\Delta_v(\{1,2\})}{2} = \frac{1}{2}$, and $US_{\{1,3\}}(N,v) = \frac{\Delta_v(\{1,2\})}{2} + \frac{\Delta_v(\{3,4\})}{2} = 1$.
\hfill $\lrcorner$
\end{example}

The characterization of the Union Shapley value via the potential function implies that, among all groups of a given size, the best group is the one that damages the game the most, where the quality of the game is measured by the potential function.
As a consequence, the best group ``covers'' as many distinct parts of the game as possible.
We illustrate this observation with the following two examples.

\begin{example}\label{example:union_positive}
\emph{(Positive Games)}
Consider an arbitrary positive game, that is, a game in which all coalition dividends are non-negative.
Take any payoff vector from the core, a fundamental concept in cooperative game theory consisting of payoff allocations such that every coalition receives at least its value.
Interestingly, the carrier of such a payoff vector (i.e., the set of players receiving non-zero payoffs) forms a group of that size with the highest Union Shapley value.
In fact, the Union Shapley value selects as optimal groups precisely those that contain the carrier of some core payoff vector---whenever such groups exist.
In contrast, the Merge Shapley value and the sum extension of the Shapley value do not exhibit this property.

For a concrete example, consider the game defined by $v = u_{\{1,2\}} + u_{\{3,4,5\}}$, and suppose that the desired group size is~2.
According to the Union Shapley value, groups containing one player from each component are preferred---for example, $\{1,3\}$.
By contrast, both the Merge Shapley value and the sum extension of the Shapley value favor the group $\{1,2\}$, even though the payoff vector $x=(1,1,0,0,0)$ does not lie in the core, whereas $x=(1,0,1,0,0)$ does.

This fact follows from the definition of the potential associated with the Union Shapley value. Players who are not in the carrier of the core form a subgame whose values are zero. Among all positive games, this subgame has the minimal potential. Consequently, the remaining players minimize the potential the most and therefore attain the highest Union Shapley value.
\hfill $\lrcorner$
\end{example}

\begin{example}\label{example:union_induced_subgraph_games}
\emph{(Induced Subgraph Games)}
In this class of games, introduced by \citet{Deng:Papadimitriou:1994}, the game is defined using an undirected graph with weighted edges (for clarity, we assume there are no self-loops). The value of a coalition is given by the sum of the weights of the edges in its induced subgraph. The Shapley value for each player is equal to half of its degree—that is, half the sum of the weights of the edges incident to that player.

In this setting, the Merge Shapley value of a group reduces to the sum of the individual Shapley values, thereby completely ignoring any dependencies between players. In contrast, the Union Shapley value of a group is equal to half the sum of the weights of the edges incident to any node in the group. As a result, the Union Shapley value captures dependencies between nodes: selecting adjacent vertices becomes less rewarding, since they share the contribution of a common edge. In particular, when all edge weights are equal to one, the group of a given size that obtains the highest Union Shapley value corresponds to a vertex cover (i.e., a set of nodes containing at least one endpoint of each edge), if such a set exists.
\hfill $\lrcorner$
\end{example}

%%%%%%%%%%%%%%%%%%%%%%%%%%%%%
%%%% TWO AXIOMATIZATIONS %%%%
%%%%%%%%%%%%%%%%%%%%%%%%%%%%%
\subsection{Two Axiomatizations}

So far, we have introduced the concept of Balanced Contributions to characterize values that admit a sequential extension, and the concept of a potential to define such an extension of the Shapley value.
In this section, we show that both concepts are intrinsically connected to the notion of sequential extension.
Specifically, we establish that the axioms of Balanced Contributions and Potential, generalized to group values, uniquely characterize the sequential extension of any player value.

The first axiomatization follows from our previous discussion.
The axiom of Potential proposed by \citet{Hart:Mas-Colell:1989} formally captures the existence of a potential function for a player value.
It requires the existence of a function such that the value of a player coincides with the marginal contribution to this function.
This axiom naturally extends to groups.

\begin{definition}[Potential]
A group value $\varphi$ satisfies \emph{Potential} if there exists a function $f: \mathcal{G} \rightarrow \mathbb{R}$ with $f(\emptyset,v)=0$ such that $\varphi_S(N,v) = f(N,v) - f(N \setminus S, v)$ for every game $(N,v)$ and every non-empty coalition $S \subseteq N$.
\end{definition}

If a group value satisfies Potential, then it is a sequential extension, as $\varphi_S(N,v)$ for any $\pi \in \Pi(S)$ can be decomposed as 
\[ \varphi_S(N,v) = f(N,v) - f(N \setminus S, v) = \sum_{i \in S} \left(f(N\setminus P^\pi_i, v) - f(N\setminus (P^\pi_i \cup \{i\}), v)\right) = \sum_{i\in S} \varphi'_i(N\setminus P^\pi_i, v). \] 
As we now show, the converse implication also holds: every sequential extension satisfies Potential.
Specifically, the potential of a game is equal to the group value of its grand coalition.

\begin{theorem}\label{theorem:sequential_potential_axiom}
A group value is a sequential extension if and only if it satisfies Potential.
\end{theorem}
\begin{proof}
We already argued that Potential implies that the value is a sequential extension.
Let $\varphi$ be a group value that is a sequential extension of $\varphi'$, and let $f(N,v)=\varphi_N(N,v)$. We will show that $f$ is its potential function. For every game $(N,v)$ and coalition $S\subseteq N$ take any permutation $\pi\in \Pi(N)$ such that $\pi(i)<\pi(j)$ for all $i\in S, j\in N\setminus S$. Then
\begin{multline*}
f(N,v)-f(N\setminus S, v) = \varphi_N(N,v)-\varphi_{N\setminus S}(N\setminus S, v) = \\ 
=\sum_{i\in N}\varphi'_i(N\setminus P^\pi_i,v)-\sum_{i\in N\setminus S}\varphi'_i((N\setminus S) \setminus P^\pi_i,v)
=\sum_{i\in S}\varphi'_i(N\setminus P^\pi_i,v) =\varphi_S(N,v),
\end{multline*}
which concludes the proof.
\end{proof}

The second axiomatization is based on the notion of \emph{Balanced Contributions}.
For individual players, the Balanced Contributions property states that removing one player from the game affects the value of another player in the same way as removing the latter affects the value of the former.
For groups of players, we impose the analogous requirement: groups affect each others' values in the same way.

\begin{definition}[Balanced Contributions]
A group value $\varphi$ satisfies \emph{Balanced Contributions} if, for every game $(N,v)$ and every two coalitions $S, T \subseteq N$, it holds that $\varphi_S(N,v) - \varphi_{S\setminus T}(N \setminus T, v) = \varphi_T(N,v) - \varphi_{T \setminus S}(N \setminus S, v)$, assuming $\varphi_{\emptyset}(N,v) = 0$.
\end{definition}

Fix $T = \{i\} \subseteq S$.
The axiom then imposes that $\varphi_S(N,v) - \varphi_{S \setminus \{i\}}(N \setminus \{i\},v) = \varphi_i(N,v)$, which directly implies the sequential extension.
In the following theorem, we show that every sequential extension satisfies Balanced Contributions as well.

\begin{theorem}\label{theorem:sequential_balanced_axiom}
A group value is a sequential extension if and only if it satisfies Balanced Contributions.
\end{theorem}
\begin{proof}
We already argued that a value satisfying Balanced Contributions is a sequential extension.
From \cref{theorem:sequential_potential_axiom} we know that every value $\varphi$ that is a sequential extension satisfies Potential and its potential function be $f(N,v) = \varphi_N(N,v)$. Now, for every game $(N,v)$ and coalitions $S,T\in N$
\begin{align*}
\varphi_S(N,v)-\varphi_{S\setminus T}(N\setminus T,v) & = f(N,v)-f(N\setminus S,v)-f(N\setminus T,v)+f(N\setminus(S\cup T), v) =\\
& =\varphi_T(N,v)-\varphi_{T\setminus S}(N\setminus S,v),
\end{align*}
which concludes the proof.
\end{proof}

Interestingly, it can be shown that the Potential axiom is connected to the notion of the Union Shapley value, even without requiring the group value to extend the Shapley value.

\begin{proposition}\label{proposition:union_potential_axiom}
If a group value $\varphi$ satisfies Potential, then it satisfies $\varphi_S(N,v) = US_S(N,u_{\varphi})$, where $u_{\varphi}(S) = \sum_{i \in S} \varphi_{\{i\}}(S,v)$ for all $S \subseteq N$.
\end{proposition}

A corollary of \cref{theorem:sequential_balanced_axiom,theorem:sequential_potential_axiom} is an axiomatization of the Union Shapley value.
We use a straightforward analogue of Efficiency, called \emph{Singleton-Efficiency}, which requires that the values of singleton coalitions sum to $v(N)$, just as standard Efficiency does for players.
Note that it is not clear how Efficiency restrictions can be extended to larger groups.

\begin{proposition}\label{proposition:union_potential_efficiency_axiomatization}
The Union Shapley value is the unique group value that satisfies Potential and Singleton-Efficiency.
Moreover, it is the unique group value that satisfies Balanced Contributions and Singleton-Efficiency.
\end{proposition}

Our results demonstrate that the sequential extension---and the Union Shapley value in particular---constitute one of the most natural extensions of player values to groups in coalitional games.

%%%%%%%%%%%%%%%%%%%%%%%%%%%%%%%%%%%%%%%%%%%%%%
%%%%%%%%%%%%%%%%%%% ~SEMIVALUES %%%%%%%%%%%%%%
%%%%%%%%%%%%%%%%%%%%%%%%%%%%%%%%%%%%%%%%%%%%%%

\section{Group semivalues}\label{section:semivalues}

The goal of this section is to study properties of group values.
To this end, we systematically add classic axioms and analyze how they restrict the functional form of group values.
Eventually, we obtain a class of \emph{group weak consistent semivalues}, or simply group semivalues, which is based on a weaker monotonicity condition than that of semivalues, more natural for group values.
Group semivalues encompass the Union Shapley value, as well as other group values studied in the literature, such as the Merge Shapley value and the sum extension of the Shapley value.
The Interaction Index does not belong to this class, but rather to a corresponding class of \emph{synergy semivalues}, which will be the focus of the next section.

We begin by listing the extension to groups of all axioms that we will use in the characterization.

\begin{definition}[Linearity]
A group value $\varphi$ satisfies \emph{Linearity} if 
$\varphi_S(N, v+w)=\varphi_S(N, v)+\varphi_S(N, w)$ and $\varphi_S(N, c\cdot v)=c\cdot \varphi_S(N, v)$ for every games $(N,v), (N,w)$, non-empty coalition $S \subseteq N$ and a scalar $c\in \mathbb{R}$.
\end{definition}
\begin{definition}[Symmetry]
A group value $\varphi$ satisfies \emph{Symmetry} if $\varphi_S(N, v) = \varphi_{\pi(S)}(N, \pi(v))$ for every game $(N, v)$, every bijection $\pi: N\rightarrow N$ and non-empty coalition $S\subseteq N$.
\end{definition}
\begin{definition}[Null Player]
A group value $\varphi$ satisfies \emph{Null Player} if $\varphi_{\{i\}}(N,v) = 0$ and $\varphi_{S\cup \{i\}}(N, v)=\varphi_S(N, v)$ for every game $(N, v)$, null player $i\in N$ and non-empty coalition $S\subseteq N\setminus\{i\}$.
\end{definition}
\begin{definition}[Dummy Player]
A group value $\varphi$ satisfies \emph{Dummy Player} if $\varphi_{\{i\}}(N,v) \!=\! v(\{i\})$ and $\varphi_{S\cup\{i\}}(N, v) = \varphi_S(N, v) + v(\{i\})$ for every game $(N, v)$, dummy player $i\in N$ and non-empty coalition $S\subseteq N\setminus\{i\}$.
\end{definition}
\begin{definition}[Weak Monotonicity]
A group value $\varphi$ satisfies \emph{Weak Monotonicity} if $\varphi_S(N,v) \ge 0$ holds for every positive game $(N,v)$ and non-empty coalition $S\subseteq N$.
\end{definition}
\begin{definition}[Null Player Out]
A group value $\varphi$ satisfies \emph{Null Player Out} if $\varphi_S(N,v) = \varphi_S(N \setminus \{i\}, v)$ for every game $(N, v)$, null player $i\in N$ and non-empty coalition $S\subseteq N\setminus\{i\}$.
\end{definition}

Linearity and Symmetry extend naturally to groups.
Linearity states that the value is a linear function on the set of all games of $N$. 
In turn, Symmetry states that values do not depend on players' names. 
Specifically, if we permute the set of players, we get the permutation of their values and of value of their groups.

The Null Player axiom for player values requires that a null player gets zero value. 
To extend this to groups, it is natural to assume that adding a null player to a group does not affect its value.
Analogically, in a stronger axiom of Dummy Player, we require that adding a dummy player $i$ to the coalition increases its value by $v(\{i\})$.
Both axioms are satisfied by the sum, merge, and sequential extensions of the Shapley value, but they are violated by the Interaction Index---this follows from its synergistic nature that we discuss in \cref{section:synergies}.

Weak Monotonicity is a weaker version of classic Monotonicity and implies that group values are non-negative for positive games.
We argue that for groups Weak Monotonicity is more desirable that the standard Monotonicity axiom that impose non-negativity on monotone games.
To see that, consider a game $(\{1,2\}, v)$ with $v(\{1\}) = v(\{2\}) = v(\{1,2\}) = 1$.
This game is monotone, but the cooperation of players results in a worse outcome.
Hence, the evaluation of coalition $\{1,2\}$ may not necessarily be positive.

Null Player Out was mentioned in the preliminaries as an axiom that is sometimes omitted in the definition of semivalues, but is implied by their original definition.
We include it in our class.
Null Player Out naturally extends to groups: it requires that removing a null player from the game does not affect the payoffs of the remaining groups.

In the following propositions, we show how the consecutive axioms restrict the form of the value.
Linearity implies that the value is a linear combination of the worth of the coalitions.
By using a decomposition of a game to unanimity games, we get that it is also a linear combination of the dividends of the coalition. 

\begin{proposition}\label{theorem:linearity}
A group value $\varphi$ satisfies Linearity if and only if for every set of players $N$ there exists a collection of real weights $(p^S_T)_{S, T\subseteq N}$, such that for every game $(N,v)$ and non-empty coalition $S$ it holds:
\[
\varphi_S(N, v) = \sum_{T\subseteq N, T\not=\emptyset}p^S_T \Delta_v(T).
\]
\end{proposition}

Note that the weights may depend on the set of players; this dependence persists until the Null Player Out axiom is imposed.

As we can see, a group value that satisfies Linearity is a weighted sum of dividends of all coalitions. 
However, when group $S$ is assessed, a weight of a dividend $\Delta_v(T)$, $p^S_T$, may be arbitrary and may depend on players' names.
Adding the Symmetry axiom to Linearity ensures that the weight depends only on (1) the size of $S$, (2) the size of $T$, and (3) the size of their intersection $S \cap T$.
To see why the size of the intersection is also taken into account, note that the dividend of $\Delta_v(\{1\})$ should contribute differently to the value of group $\{1,2\}$ and to the value of group $\{2,3\}$.

\begin{proposition}\label{theorem:symmetry}
A group value $\varphi$ satisfies Linearity and Symmetry if and only if for every set of players $N$ there exists a collection of real weights $(p^{s, q}_t)_{s,t,q\in[n];q\leq s, q\leq t}$, such that for every game $(N, v)$ and non-empty coalition $S$ it holds:
\[
\varphi_S(N, v) = \sum_{T\subseteq N, T\not=\emptyset}p^{|S|, |S\cap T|}_{|T|} \Delta_v(T)
\]
\end{proposition}

Consider adding the Null Player axiom.
As it turns out, if a group value satisfies additionally Null Player, then it cannot depend on dividends of coalitions that are disjoint from the studied coalition, e.g., $\Delta_v(\{1\})$ does not contribute to the value of coalition $\{2,3\}$, as intuition would suggest.
On top of that, the weights of dividends cannot depend directly on the size of the assessed group.

\begin{proposition}\label{theorem:null-player}
A group value $\varphi$ satisfies Linearity, Symmetry and Null Player if and only if for every set of players $N$ there exists a collection of weights $(p^q_t)_{q,t\in [n];q\leq t}$, such that for every game $(N, v)$ and non-empty coalition $S$ it holds:
\[ 
\varphi_S(N, v) = \sum_{T \subseteq N, S \cap T\not=\emptyset}p^{|S\cap T|}_{|T|} \Delta_v(T).
\]
\end{proposition}

Adding the Weak Monotonicity axiom implies that the weights are non-negative and Dummy Player fixes the weight $p^1_1 = 1$.
Finally, adding the Null Player Out axiom makes the weights independent from the set of players in the game.
This leads to the following characterization of semivalues.

\begin{definition}[Group (weak consistent) semivalues]
A group value $\varphi$ is a group (weak consistent) semivalue if there exists an infinite collection of non-negative weights $(p^q_t)_{q,t\in \mathbb{N};q\leq t}$ with $p^1_1=1$, such that for every game $(N, v)$ and non-empty coalition $S$ it holds:
\begin{equation*}
\varphi_S(N, v) = \sum_{T \subseteq N, S \cap T\not=\emptyset}p^{|S\cap T|}_{|T|} \Delta_v(T).
\end{equation*}    
\end{definition}

\begin{theorem}\label{theorem:semivalues}
A group value $\varphi$ satisfies Linearity, Symmetry, Dummy Player, Weak Monotonicity, and Null Player Out if and only if it is a group weak consistent semivalue.
\end{theorem}

Let us now discuss several group values which belong to this class.

\begin{itemize}
\item The Union Shapley value, defined in \cref{section:union_shapley_value}, is a group semivalue with weights $p^q_t = \frac{1}{t}$. Hence, it satisfies all the listed axioms.

\item The Merge Shapley value, defined in \cref{section:preliminaries}, is a group semivalue with weights $p^q_t = \frac{1}{t-q+1}$ (see appendix for the calculations).

\item The sum extension $\varphi$ of the Shapley value is also a group semivalue:
    \[ \varphi_S(N,v) = \sum_{i \in S} SV_i(N,v) = \sum_{T \subseteq N, T \cap S \neq \emptyset} \frac{|S \cap T|}{|T|} \Delta_v(T).\]
    Hence, it is a semivalue with weights $p^q_t = \frac{q}{t}$.

\item The worth of a coalition is also a group semivalue. Specifically, consider $\varphi$ defined as follows:
    \[ \varphi_S(N,v) = v(S) = \sum_{T \subseteq S} \Delta_v(T).\]
    Hence, it is a semivalue with weight $p^q_t = 1$ if $q=t$ (i.e., $|S \cap T| = |T|$) and $p^q_t = 0$, otherwise.
\end{itemize}

Out of listed group values, only the worth of the coalition $v(S)$ is not extending the Shapley value: it extends the value $\varphi$ that assigns each player worth of its singleton coalition: $\varphi_i(N,v) = v(\{i\})$ for every $(N,v), i \in N$.

Group values obtained by sum, merge and sequential extensions can violate many of the axioms considered by us in this section. 
However, as we now show, such extensions exist for all consistent semivalues and belong to group semivalues, satisfying all axioms.

\begin{theorem}\label{theorem:semivalues_player_extensions}
Every player consistent semivalue:
\begin{itemize}
\item has a unique sum extension which is a group weak consistent semivalue;
\item has a unique merge extension which is a group weak consistent semivalue;
\item has a unique sequential extension which is a group weak consistent semivalue.
\end{itemize}
\end{theorem}

These results are derived from the following analysis.

First of all, every player consistent semivalue $\varphi'$ can be represented as a weighted sum over dividends of player's coalitions:
\begin{equation}\label{equation:semivalues_analysis}
\varphi'_i(N,v) = \sum_{T \subseteq N, i \in T} p_{|T|} \Delta_v(T),
\end{equation}
for some collection of non-negative weights $(p_t)_{t \in \mathbb{N}}$.
In particular, the Shapley value is obtained through weights $p_t = 1/t$, while the Banzhaf index through $p_t = 1/2^{t-1}$.
This follows from the reformulation based on the definition of dividends.

A group semivalue $\varphi$ extends the player value $\varphi'$ if weights $p_t^1$ are the same as $p_t$:
\[ p_t^1 = p_t \mbox{ for every }t \in \mathbb{N}. \]
Indeed, for a group semivalue $\varphi$ and $S = \{i\}$ we have:
\begin{equation}\label{equation:semivalues_analysis2} 
\varphi_S(N, v) = \sum_{T \subseteq N, S \cap T\not=\emptyset}p^{|S\cap T|}_{|T|} \Delta_v(T) = \sum_{T \subseteq N, i \in T}p^{1}_{|T|} \Delta_v(T).
\end{equation}
Since dividends are arbitrary, \cref{equation:semivalues_analysis,equation:semivalues_analysis2} coincide if and only if weights are the same.

Recall that for an assessed coalition $S$ and $T$ being the coalition whose dividend is considered, $q$ in the weight $p_t^q$ represents the size of their intersection, i.e., $q = |S \cap T|$, and $t$ represents the size of coalition $T$, i.e., $t = |T|$.  
Now, each extension uniquely characterizes the weights $p_t^q$ based on the weights $p_t^1$:

\begin{table}[t]
\centering
\begin{minipage}{0.32\textwidth}
\centering
\begin{tabular}{c|ccccc}
\multicolumn{6}{c}{Sum extension} \\
\toprule
$t$ \textbackslash{} $q$ & 1 & 2 & 3 & 4 & 5 \\
\midrule
1 & $\nicefrac{1}{1}$\tikzmark{a1l} \\
2 & $\nicefrac{1}{2}$ & \nicefrac{2}{2}\tikzmark{a2l}\\
3 & $\nicefrac{1}{3}$ & \nicefrac{2}{3} & \nicefrac{3}{3}\tikzmark{a3l} \\
4 & $\nicefrac{1}{4}$ & \nicefrac{2}{4} & \nicefrac{3}{4} & \nicefrac{4}{4}\tikzmark{a4l} \\
5 & $\nicefrac{1}{5}$\tikzmark{a1r} & \nicefrac{2}{5}\tikzmark{a2r} & \nicefrac{3}{5}\tikzmark{a3r} & \nicefrac{4}{5}\tikzmark{a4r} & \nicefrac{5}{5} \\
\bottomrule 
\end{tabular}\vspace{0.2cm}
\begin{tikzpicture}[overlay, remember picture]
  \foreach \i in {1,...,4} {
    \draw[line width=0.4pt,gray!30]
      ([xshift=1ex,yshift=2ex]pic cs:a\i l) --
      ([xshift=1ex,yshift=-0.2ex]pic cs:a\i r);
    \node[above right=1px and 2px of pic cs:a\i l,color=gray!75] {\tiny x{\number\numexpr\i+1\relax}};
  }
\end{tikzpicture}
\end{minipage}
\hfill
\begin{minipage}{0.32\textwidth}
\centering
\begin{tabular}{c|ccccc}
\multicolumn{6}{c}{Merge extension} \\
\toprule
$t$ \textbackslash{} $q$ & 1 & 2 & 3 & 4 & 5 \\
\midrule
1 & \tikzmark{m1l}$\nicefrac{1}{1}$ \\
2 & \tikzmark{m2l}$\nicefrac{1}{2}$ & \nicefrac{1}{1}\\
3 & \tikzmark{m3l}$\nicefrac{1}{3}$ & \nicefrac{1}{2} & \nicefrac{1}{1} \\
4 & \tikzmark{m4l}$\nicefrac{1}{4}$ & \nicefrac{1}{3} & \nicefrac{1}{2} & \nicefrac{1}{1} \\
5 & $\nicefrac{1}{5}$ & \tikzmark{m4r}\nicefrac{1}{4} & \tikzmark{m3r}\nicefrac{1}{3} & \tikzmark{m2r}\nicefrac{1}{2} & \tikzmark{m1r}\nicefrac{1}{1} \\
\bottomrule 
\end{tabular}\vspace{0.2cm}
\begin{tikzpicture}[overlay, remember picture]
  \foreach \i in {1,...,4} {
    \draw[line width=0.4pt,gray!30]
      ([yshift=-0ex]pic cs:m\i l) --
      ([yshift=-0ex]pic cs:m\i r);
  }
\end{tikzpicture}
\end{minipage}
\hfill
\begin{minipage}{0.32\textwidth}
\centering
\begin{tabular}{c|ccccc}
\multicolumn{6}{c}{Sequential extension} \\
\toprule
$t$ \textbackslash{} $q$ & 1 & 2 & 3 & 4 & 5 \\
\midrule
1 & $\nicefrac{1}{1}$ \\
2 & \tikzmark{s1l}{$\nicefrac{1}{2}$} & {$\nicefrac{1}{2}$}\tikzmark{s1r} \\
3 & \tikzmark{s2l}{$\nicefrac{1}{3}$} & $\nicefrac{1}{3}$ & {$\nicefrac{1}{3}$}\tikzmark{s2r} \\
4 & \tikzmark{s3l}{$\nicefrac{1}{4}$} & $\nicefrac{1}{4}$ & $\nicefrac{1}{4}$ & {$\nicefrac{1}{4}$}\tikzmark{s3r} \\
5 & \tikzmark{s4l}{$\nicefrac{1}{5}$} & $\nicefrac{1}{5}$ & $\nicefrac{1}{5}$ & $\nicefrac{1}{5}$ & {$\nicefrac{1}{5}$}\tikzmark{s4r} \\
\bottomrule
\end{tabular}\vspace{0.2cm}
\begin{tikzpicture}[overlay, remember picture]
  \foreach \i in {1,...,4} {
    \draw[line width=0.4pt,gray!30]
      ([yshift=2ex]pic cs:s\i l) --
      ([yshift=2ex]pic cs:s\i r);
  }
\end{tikzpicture}
\end{minipage}
\caption{Weights $p_t^q$ of the sum, merge, and sequential extensions of the Shapley value.}
\label{table:weights}
\end{table}

\begin{itemize}

\item A group semivalue is a sum extension if and only if 
$p_t^q = q \cdot p_t^1$ for every $q,t \in \mathbb{N}$ with $q \le t$.  
It is easy to check that the weight $q \cdot p_t^1$ is exactly the weight that appears when player values are added.

\item A group semivalue is a merge extension if and only if 
$p_t^q = p_{t-q+1}^1$ for every $q,t \in \mathbb{N}$ with $q \le t$.  
This means that the only aspect that matters is the number of players in $T$ who do not belong to $S$.  
This is consistent with the merge approach: since the players from $S$ are merged, they get only one share of the dividend of $T$, and $|T \setminus S| + 1$ is the number of players among whom this dividend is shared.

\item A group semivalue is a sequential extension if and only if 
$p_t^q = p_t^1$ for every $q,t \in \mathbb{N}$ with $q \le t$.  
Hence, $q$, the size of the intersection of $S$ and $T$, does not affect the weight.  
This is consistent with the intuition that the sequential extension measures the impact of removing a group from the game: if any of the players from the coalition $T$ is removed, then this dividend ceases to exist.

\end{itemize}

See \cref{table:weights} for an illustration.

%%%%%%%%%%%%%%%%%%%%%%%%%%%%%%%%%
%%%% SEMIVALUES - COMPARISON %%%%
%%%%%%%%%%%%%%%%%%%%%%%%%%%%%%%%%

\subsection{Comparison with generalized semivalues by \citet{Marichal:etal:2007}}\label{section:marichal}

\citet{Marichal:etal:2007} defined a class of \emph{generalized semivalues} as a group value $\varphi$ that for every game $(N,v)$ and $S \subseteq N$ satisfies 
\[ \varphi_S(N,v) = \sum_{T \subseteq N \setminus S} q_{|T|}^{|S|}(n) \left( v(T \cup S) - v(T)\right) \]
for some $\{q_t^s(n)\}_{t=0,\dots,n-s}$ satisfying $\sum_{t=0}^{n-s} \binom{n-s}{t} q_t^s(n) = 1$.

To characterize generalized semivalues, \citet{Marichal:etal:2007} used the \emph{Dummy Coalition} axiom.
Here, a coalition is dummy if added to any coalition of the remaining players always contributes the same.

\begin{definition}[Dummy Coalition by \citet{Marichal:etal:2007}]
A group value $\varphi$ satisfies \emph{Dummy Coalition by \citet{Marichal:etal:2007}} if $\varphi_S(N,v) = v(S)$ for every game $(N,v)$ and dummy coalition $S \subseteq N$. 
A~coalition $S$ is a \emph{dummy coalition} if $v(T \cup S) = v(T) + v(S)$ holds for every $T \subseteq N \setminus S$.
\end{definition}

In the presence of the other axioms, Dummy Coalition implies that interactions between the members of the coalition $S$ and the remaining players are not taken into account.
As a result, generalized semivalues depends only on a subset of values of coalitions.
In particular, the value of group $S$ depends only on values of coalitions that either contain all players from $S$ or none of them.
Hence, the sum extension of the Shapley value, as well as the Union Shapley value that depend on all values, are not included in this class.

On the other hand, generalized semivalues include group values that do not satisfy Dummy Player.
One sample value was defined by \citet{Marichal:etal:2007} under the name of \emph{chaining generalized value}: it is obtained through the weights $q_{t}^{s} = s/(s+t)$.

As we now show, the overlap of both group semivalues and generalized semivalues consists of values that can be obtained using the merge extension.

\begin{proposition}\label{proposition:merge_axioms}
Every group weak consistent semivalue that belongs to \citet{Marichal:etal:2007}'s generalized semivalues is a merge extension.
\end{proposition}

\begin{corollary}\label{corollary:merge_axioms}
The Merge Shapley value is the unique value that extends the Shapley value and satisfies Linearity, Symmetry, Dummy Player, Null Player Out, and Dummy Coalition by \citet{Marichal:etal:2007}.
\end{corollary}

\subsection{Axiomatizations of group values}

\cref{corollary:merge_axioms} contains an axiomatization of the Merge Shapley value.
In this section, we present axiomatizations of the remaining two extensions of the Shapley value: the Union Shapley value and the sum extension.

We begin by considering the implications of the Singleton-Efficiency axiom.

\begin{proposition}\label{proposition:efficiency_equivalence}
For a group weak consistent semivalue with weights $(p_t^q)_{q,t \in \mathbb{N}, q \le t}$, the following conditions are equivalent:
\begin{itemize}
\item it satisfies Singleton-Efficiency;
\item it extends the Shapley value; and
\item $p^1_t = 1/t$.
\end{itemize}
\end{proposition}

Consider now a game in which all players are necessary to generate any value whatsoever, i.e., game $u_N$.
Clearly, all players are equally important, but this raises the question of how to evaluate groups.
We consider two variants.

The first variant states that, in such a case, all groups are equally important as well: every player and every group of players is necessary.
If any player or group abstains from cooperation, no value is created.

The second variant states that the importance of a group should be proportional to its size.
For example, if we interpret the value as a reward, then each player should receive an equal share and, analogously, each group should receive the sum of the shares of its members.

\begin{definition}[Group Equality]
A group value $\varphi$ satisfies \emph{Group Equality} if $\varphi_S(N,u_N) = \varphi_T(N,u_N)$ for every game $(N,u_N)$ and all non-empty coalitions $S,T\subseteq N$.
\end{definition}

\begin{definition}[Group Proportionality]
A group value $\varphi$ satisfies \emph{Group Proportionality} if $\varphi_S(N,u_N)/|S| = \varphi_T(N,u_N)/|T|$ for every game $(N,u_N)$ and all non-empty coalitions $S,T\subseteq N$.
\end{definition}

As we now show, Group Equality and Group Proportionality, when combined with Singleton-Efficiency and the axioms characterizing weak consistent semivalues, uniquely characterize the Union Shapley value and the sum extension of the Shapley value, respectively.

\begin{theorem}\label{theorem:axioms_union}
The Union Shapley value is the unique value that satisfies Linearity, Symmetry, Dummy Player, Weak Monotonicity, Null Player Out, Singleton-Efficiency, and Group Equality.
\end{theorem}

\begin{theorem}\label{theorem:axioms_sum}
The sum extension of the Shapley value is the unique value that satisfies Linearity, Symmetry, Dummy Player, Weak Monotonicity, Null Player Out, Singleton-Efficiency, and Group Proportionality.
\end{theorem}

%%%%%%%%%%%%%%%%%%%%%%%%%%%%%%%%%%%%%%%%%%%%%%
%%%%%%%%%%%%%%%%%%% SYNERGISTIC %%%%%%%%%%%%%%%
%%%%%%%%%%%%%%%%%%%%%%%%%%%%%%%%%%%%%%%%%%%%%%

\section{Class of synergy semivalues}\label{section:synergies}

In this section, we take a different perspective on group values.

To explain the idea, consider an additive game $(N,v)$, i.e., a game in which there is no profit from cooperation and the worth of every coalition is simply the sum of worth of players working alone: $v(S) = \sum_{i \in S} v(\{i\})$ for every $S \subseteq N$. 
The Shapley value of each player is equal to her singleton worth: $SV_i(N,v) = v(\{i\})$ and based on the Dummy Player axiom, satisfied by all group semivalues, the value of every larger group $S$ is equal to its worth: $\varphi_S(N,v) = v(S)$.
However, one can argue that the \emph{synergy} of group $S$ is zero.
In fact, its dividend as well as the Interaction Index are zero.

To capture this, we use Dummifying Player, proposed by \citet{Grabisch:Roubens:1999} (under the name of Dummy Player) that states the value of every non-singleton group that contains a dummy player is zero, and the value of a dummy player is equal to her singleton value.

\begin{definition}[Dummifying Player]
A group value $\varphi$ satisfies \emph{Dummifying Player} if $\varphi_{\{i\}}(N,v) = v(\{i\})$ and $\varphi_{S \cup \{i\}}(N, v) = 0$ for every game $(N, v)$, dummy player $i\in N$ and non-empty coalition $S\subseteq N\setminus\{i\}$.
\end{definition}

As we now show, replacing the Dummy Player axiom with Dummifying Player in the axiomatization of group weak consistent semivalues results in a new class of group values.
We name this class \emph{synergy semivalues}.

\begin{definition}[Synergy semivalues]
A group value $\varphi$ is a \emph{synergy semivalue} if there exists an infinite collection of non-negative weights $(p^q_t)_{q,t\in \mathbb{N};q\leq t}$ with $p^1_1=1$, such that for every game $(N, v)$ and non-empty coalition $S$ it holds:
\begin{equation}\label{equation:synergistic_semivalues_form}
\varphi_S(N, v)=\sum_{S\subseteq T\subseteq N} p^{|S \cap T|}_{|T|}\Delta_v(T) = \sum_{S\subseteq T\subseteq N} p^{|S|}_{|T|}\Delta_v(T).
\end{equation}
\end{definition}

\begin{theorem}\label{theorem:synergistic_semivalues}
A group value $\varphi$ satisfies Linearity, Symmetry, Dummifying Player, Weak Monotonicity and Null Player Out if and only if it is a synergy semivalue.
\end{theorem}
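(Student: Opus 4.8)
The plan is to prove both implications, mirroring the structure used for the ordinary group semivalues in \cref{theorem:semivalues}, with the sole change that \emph{Dummifying Player} replaces \emph{Dummy Player}. The forward direction reduces, via Linearity, to evaluating $\varphi$ on unanimity games, where the two remaining axioms pin down every relevant weight; the converse is a direct verification that the form in \cref{equation:synergistic_semivalues_form} satisfies all four axioms.

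For the forward direction, I would start from \cref{theorem:symmetry}: since $\varphi$ satisfies Linearity and Symmetry, there are weights with $\varphi_S(N,v) = \sum_{T\subseteq N, T\neq\emptyset} p^{|S|,|S\cap T|}_{|T|}\Delta_v(T)$. As unanimity games form a basis and $\Delta_{u_R}(T)$ is $1$ for $T=R$ and $0$ otherwise, it suffices to determine the weights through the identity $\varphi_S(N,u_R) = p^{|S|,|S\cap R|}_{|R|}$. The key observation is how players behave in a unanimity game $u_R$: every $i\notin R$ is a null player (hence a dummy player with $u_R(\{i\})=0$), while in $u_{\{i\}}$ the player $i$ is a dummy player with $u_{\{i\}}(\{i\})=1$. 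I would then extract the weights as follows. First, whenever $S\not\subseteq R$, pick a witness $i\in S\setminus R$; since $i$ is a null player of $u_R$, Dummifying Player gives $\varphi_S(N,u_R)=0$ (using the singleton clause if $S=\{i\}$ and the second clause otherwise), so $p^{|S|,|S\cap R|}_{|R|}=0$ for every pair with $|S\cap R|<|S|$. Second, applying the singleton clause of Dummifying Player to $u_{\{i\}}$ yields $p^{1,1}_1 = \varphi_{\{i\}}(N,u_{\{i\}}) = u_{\{i\}}(\{i\}) = 1$. Finally, since each $u_R$ is positive, Weak Monotonicity gives $\varphi_S(N,u_R)=p^{|S|,|S\cap R|}_{|R|}\geq 0$, so all surviving weights (those with $S\subseteq R$, i.e.\ $|S\cap R|=|S|$) are non-negative. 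Renaming the surviving weights $p^{|S|}_{|T|}:=p^{|S|,|S|}_{|T|}$ and discarding the zero ones collapses the sum to $\sum_{S\subseteq T\subseteq N} p^{|S|}_{|T|}\Delta_v(T)$, which is exactly \cref{equation:synergistic_semivalues_form}, with $p^1_1=1$ and non-negative weights.

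For the converse, I would verify the four axioms for an arbitrary synergistic semivalue. Linearity and Symmetry are immediate from the formula, since $\Delta_v(T)$ is linear in $v$ and the weights depend only on $|S|$ and $|T|$. Weak Monotonicity follows because a positive game has $\Delta_v(T)\geq 0$ and the weights are non-negative. For Dummifying Player the key fact is that a dummy player $i$ has $\Delta_v(\{i\})=v(\{i\})$ and $\Delta_v(T)=0$ for every $T\ni i$ with $|T|\geq 2$; this follows by writing $v = v' + v(\{i\})\,u_{\{i\}}$, where $i$ is a null player of $v'$, and using that coalitions containing a null player have zero dividend. Consequently, in $\varphi_{\{i\}}$ the only surviving term is $T=\{i\}$, giving $p^1_1\,v(\{i\})=v(\{i\})$, while in $\varphi_{S\cup\{i\}}$ every index set $T\supseteq S\cup\{i\}$ has size at least two and contains $i$, so all dividends vanish and $\varphi_{S\cup\{i\}}(N,v)=0$.

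The main obstacle is the forward direction, and specifically the bookkeeping that the chosen unanimity games reach exactly the weights that actually occur in the formula: a weight $p^{s,q}_t$ with $q<s$ is relevant only when some coalitions $S,R\subseteq N$ realize $|S|=s$, $|R|=t$, $|S\cap R|=q$ (forcing $S\not\subseteq R$ and $s+t-q\leq n$), and precisely for such configurations one can pick the witness $i\in S\setminus R$ needed to invoke Dummifying Player on $u_R$. Checking that no relevant weight is left undetermined---and that irrelevant index combinations (such as $|S\cap R|<|S|$ with $|R|=n$) never arise---is the one place where care with the size constraints is required; everything else is routine linear algebra over the unanimity basis.
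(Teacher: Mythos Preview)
Your proposal is correct and follows essentially the same route as the paper: both directions proceed exactly as in the paper's proof, starting from \cref{theorem:symmetry}, using unanimity games to read off the weights, invoking Dummifying Player on a null player $i\in S\setminus R$ to kill the $S\not\subseteq R$ weights, and handling $p^1_1=1$ and non-negativity via $u_{\{i\}}$ and Weak Monotonicity respectively. Your closing worry about ``irrelevant index combinations'' is harmless over-caution---weights $p^{s,q}_t$ that are not realized by any pair $S,T\subseteq N$ simply never appear in the formula, so nothing needs to be shown about them.
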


In contrast to group semivalues, synergy semivalues assess a coalition $S$ by a weighted sum of the dividends not of coalitions that overlap with $S$, but of coalitions that contain all players from $S$.
In this way, contributions of individual members are not counted; only synergies among all players are taken into account.
As a result, coalitions that contain any pair of independent players have zero synergy semivalues.

Note that classes of group semivalues and synergy semivalues are defined through the collection of weights of the same type: $(p^q_t)_{q,t\in \mathbb{N},q \leq t}$. 
This fact creates a natural correspondence relation between group values and their synergistic version.
Specifically, we say that a group semivalue \emph{corresponds} to a synergy semivalue if they use the same collection of weights.

As an illustration, consider a group value defined simply as the worth of a coalition: $\varphi_S(N,v) = v(S)$ for every $(N,v), S \subseteq N$.
As we already discussed, it is a group semivalue with weights $p^q_t = 1$ if $q=t$ and $0$, otherwise.
Now, by inserting these weights to \cref{equation:synergistic_semivalues_form}, we get that the synergistic version is simply the dividend of the coalition, i.e., $\varphi$ defined as $\varphi_S(N,v) = \Delta_v(S)$ for every $(N,v), S \subseteq N$.

Worth of a coalition and the dividend of a coalition coincide on singleton coalitions. 
Thus, both group values extends the same player value $\varphi_i(N,v) = v(\{i\})$.
In fact, a group semivalue $\varphi$ and the corresponding synergy semivalue $\hat{\varphi}$ are always two extensions of the same player value, as for every game $(N,v)$ and player $i \in N$ it holds:
\[
\varphi_{\{i\}}(N,v)=\sum_{T\subseteq N, \{i\} \cap T\not=\emptyset} p^{|\{i\}\cap T|}_{|T|} \Delta_v(T) = \sum_{T\subseteq N, i \in T} p^1_{|T|} \Delta_v(T) = \sum_{\{i\}\subseteq T\subseteq N} p^{|\{i\}\cap T|}_{|T|} \Delta_v(T)=\hat{\varphi}_{\{i\}}(N,v).
\]

By recalling that the Union Shapley value is defined through weights $p^q_t = \frac{1}{t}$, we get the definition of a corresponding synergistic group value.

\begin{definition}[Intersection Shapley value]
The Intersection Shapley value of non-empty coalition $S$ in a game $(N,v)$ is defined as follows:
\[ IS_S(N,v) = \sum_{S \subseteq T \subseteq N} \frac{\Delta_v(T)}{|T|} = \sum_{S \subseteq T \subseteq N} \frac{(|T|-1)!(|N|-|T|)!}{|N|!} \sum_{R\subseteq S}(-1)^{|R|}v(T\setminus R)\]
\end{definition}

The Intersection Shapley value has an analogous interpretation to the Union Shapley value: for the Union Shapley value we had:
\[ US_S(N,v) = \sum_{T \in \bigcup_{i \in S} \mathcal{S}_i} \frac{\Delta_v(T)}{|T|}, \quad \text{ and } \quad IS_S(N,v) = \sum_{T \in \bigcap_{i \in S} \mathcal{S}_i} \frac{\Delta_v(T)}{|T|}. \]
Recall that $\mathcal{S}_i = \{S \subseteq N : i \in S\}$.

Let us present corresponding synergistic versions of previously discussed group semivalues.
Interestingly, we show that the Interaction Index corresponds to the Merge Shapley value.

\begin{itemize}

\item The Interaction Index, defined in \cref{section:preliminaries}, is a synergy semivalue with weights $p^q_t = \frac{1}{t-q+1}$. Hence, it is a corresponding synergistic value to the Merge Shapley value.
See \cref{appendix:calculations} for the calculations.

\item The Intersection Shapley value is a synergy semivalue with weights $p^q_t = \frac{1}{t}$ and it is a corresponding synergistic value to the Union Shapley value.

\item The Intersection Shapley value multiplied by the size of the coalition, i.e., $\varphi_S(N,v) = |S| \cdot IS_S(N,v)$ for every $(N,v)$, $S \subseteq N$, is a synergy semivalue with weights $p^q_t = \frac{q}{t}$. Therefore, surprisingly, it is a corresponding synergistic value to the sum extension of the Shapley value. 

\item The dividend of a coalition is a synergy semivalue corresponding to the worth of a coalition, as we already discussed. Note, however, that it does not extend the Shapley value.

\end{itemize}

\begin{example}\label{example:intersection}
Consider again the game from \cref{example:12_34,example:12_34_cont1,example:12_34_cont2}: $v = u_{\{1,2\}} + u_{\{3,4\}}$. 
Based on \cref{equation:synergistic_semivalues_form}, all synergy semivalues assign value $0$ to coalition $T = \{1,3\}$.
For $S = \{1,2\}$, both the Harsanyi dividend and the Interaction Index equal $1$.
In turn, the Intersection Shapley value assigns value $\frac{1}{2}$.
Hence, for both coalitions we get that the sum of the Union Shapley value and the Intersection Shapley value is equal to the sum of the individual Shapley values.
For $S = \{1,2\}$, the sum of Shapley values is $1$, out of which $\frac{1}{2}$ is attributed to their synergy ($IS_S(N,v) = \frac{1}{2}$), so their joint contribution equals $\frac{1}{2}$: $US_S(N,v) = 1-\frac{1}{2} = \frac{1}{2}$. 
In turn, for $T = \{1,3\}$, the sum of Shapley values is $1$, out of which $0$ is attributed to their synergy ($IS_S(N,v) = 0$), so their joint contribution equals $1$: $US_S(N,v) = 1-0=1$.
\hfill $\lrcorner$
\end{example}

The following table summarizes all four pairs of corresponding group values considered in the paper by demonstrating values of groups of different sizes in a symmetric game $u_N$.

\begin{center}
    \setlength{\tabcolsep}{5pt} 
    \begin{tabular}{ccccccccc}
    group semivalue & & synergy semivalue & $1$ & $2$ & $3$ & \ldots & $n-1$ & $n$ \\
    \toprule
    $US_S(v)$ & \small$\longleftrightarrow$ & $IS_S(v)$ & $\nicefrac{1}{n}$ & $\nicefrac{1}{n}$ & $\nicefrac{1}{n}$ & \ldots & $\nicefrac{1}{n}$ & $\nicefrac{1}{n}$ \\
    $MS_S(v)$ & \small$\longleftrightarrow$ & $II_S(v)$ & $\nicefrac{1}{n}$ & $\nicefrac{1}{(n-1)}$ & $\nicefrac{1}{(n-2)}$ & \ldots & $\nicefrac{1}{2}$ & $1$ \\
    $\sum_{i\in S}SV_i(v)$ & \small$\longleftrightarrow$ & $|S| \cdot IS_S(v)$ & $\nicefrac{1}{n}$ & $\nicefrac{2}{n}$ & $\nicefrac{3}{n}$ & \ldots & $\nicefrac{(n-1)}{n}$ & $1$ \\
    $v(S)$ & \small$\longleftrightarrow$ & $\Delta_v(S)$ & $0$  & $0$  & $0$  & \ldots & $0$ & $1$ \\
  \end{tabular}
\end{center}

By making use of the axioms employed to axiomatize the Union Shapley value and the sum extension of the Shapley value, we obtain the following axiomatizations of the Intersection Shapley value and its multiplication, which is a synergy semivalue corresponding to the sum extension.

\begin{proposition}\label{proposition:axioms_intersection}
The Intersection Shapley value is the unique value that satisfies Linearity, Symmetry, Dummifying Player, Weak Monotonicity, Null Player Out, Singleton-Efficiency, and Group Equality.
\end{proposition}

\begin{proposition}\label{proposition:axioms_intersection_s}
The Intersection Shapley value multiplied by the size of the coalition (i.e., $\varphi_S(N,v) = |S| \cdot IS_S(N,v)$ for every game $(N,v)$ and $S \subseteq N$) is the unique value that satisfies Linearity, Symmetry, Dummifying Player, Weak Monotonicity, Null Player Out, Singleton-Efficiency, and Group Proportionality.
\end{proposition}

\subsection{Decomposition of the sum of Shapley values}

The relation identified in \cref{example:intersection} is not a coincidence.
In fact, for every game $(N,v)$ and every two-player coalition $\{i,j\}$, it holds that
\[ US_{\{i,j\}}(N,v) + IS_{\{i,j\}}(N,v) = SV_i(N,v) + SV_j(N,v). \]
Thus, the Union Shapley value and the Intersection Shapley value constitute a decomposition of the sum of Shapley values into a joint value of both players and their shared part, which is accounted for only once in the joint value.
This decomposition can be a useful tool for analyzing interactions between specific players.

As illustrated by \cref{example:intersection}, no other known pair of corresponding group values satisfies this condition.
Moreover, if it is true for any other pair that extends the Shapley value, then they must agree with the Union and Intersection Shapley values on groups of size two.

\begin{proposition}\label{proposition:union_intersection_pair}
If a group weak consistent semivalue $\varphi$ and its corresponding synergy semivalue $\hat{\varphi}$ extend the Shapley value and satisfies $\varphi_{\{i,j\}}(N,v) + \hat{\varphi}_{\{i,j\}}(N,v) = SV_i(N,v) + SV_j(N,v)$ for every game $(N,v)$ and $i,j \in N$, then $\varphi_{\{i,j\}}(N,v) = US_{\{i,j\}}(N,v)$ and $\hat{\varphi}_{\{i,j\}}(N,v) = IS_{\{i,j\}}(N,v)$.
\end{proposition}

For arbitrary coalitions, the relation between the Union Shapley value and the Intersection Shapley value can be described using the inclusion-exclusion principle:
\begin{theorem}\label{theorem:usisrelation}
For every game $(N,v)$ and coalition $S$: $US_S(N,v) \!=\! \sum_{\emptyset \subsetneq T \subseteq S} (-1)^{|T|-1} IS_T(N,v)$.
\end{theorem}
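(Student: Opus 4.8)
The plan is to prove the identity by expanding both sides in the Harsanyi dividend basis and comparing, for each coalition $R \subseteq N$, the coefficient of the common term $\Delta_v(R)/|R|$. Since both $US$ and $IS$ are linear in the dividends, it suffices to verify that these coefficients agree; the claimed identity then holds termwise for every game $(N,v)$. Throughout I use $R$ for the summation coalition to avoid clashing with the coalition $T$ in the statement.

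First I would fix a coalition $R \subseteq N$ and read off the coefficient of $\Delta_v(R)/|R|$ on each side. By the definition of the Union Shapley value, the left-hand side contributes coefficient $1$ precisely when $S \cap R \neq \emptyset$, and $0$ otherwise. On the right-hand side, using $IS_T(N,v) = \sum_{T \subseteq R \subseteq N} \Delta_v(R)/|R|$, the summand $\Delta_v(R)/|R|$ appears in $IS_T$ exactly when $T \subseteq R$. Hence its coefficient on the right-hand side is
\[ \sum_{\emptyset \subsetneq T \subseteq S,\, T \subseteq R} (-1)^{|T|-1} = \sum_{\emptyset \subsetneq T \subseteq S \cap R} (-1)^{|T|-1}. \]

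Next I would evaluate this alternating sum. Writing $k = |S \cap R|$ and grouping subsets by cardinality, it equals $\sum_{j=1}^{k} \binom{k}{j} (-1)^{j-1} = 1 - \sum_{j=0}^{k} \binom{k}{j}(-1)^j$. By the binomial theorem $\sum_{j=0}^{k} \binom{k}{j}(-1)^j = 0$ whenever $k \geq 1$, so the coefficient equals $1$ when $S \cap R \neq \emptyset$; when $S \cap R = \emptyset$ the sum is empty and equals $0$. These coefficients match those of the left-hand side for every $R$, which completes the argument.

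I do not expect a serious obstacle here, as the whole computation reduces to the standard inclusion-exclusion identity. The only point requiring a little care is correctly identifying, for a fixed $R$, which intersection-Shapley terms $IS_T$ pick up the summand $\Delta_v(R)/|R|$---namely those with $T \subseteq R$---so that the constraint $T \subseteq S$ combines with $T \subseteq R$ into $T \subseteq S \cap R$, which is exactly what makes the alternating sum collapse to the indicator of $S \cap R \neq \emptyset$.
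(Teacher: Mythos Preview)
Your proof is correct and follows essentially the same approach as the paper: both expand in the Harsanyi dividend basis, swap the order of summation, and reduce the inner alternating sum over nonempty subsets of $S\cap R$ (the paper uses $Q$ for your $R$) to the indicator of $S\cap R\neq\emptyset$ via the binomial identity. Your write-up is in fact slightly cleaner, since you explicitly record the constraint $T\subseteq S\cap R$ after swapping, whereas the paper's displayed inner sum reads $\emptyset\subsetneq T\subseteq Q$ (a minor slip that does not affect the conclusion).
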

Hence, to compute the joint value of a group, we add the worth of singletons, subtract the overlapping synergies of pairs, add the synergies of triplets (which were subtracted too many times), and so on.

See \cref{table:example_n=3} for an additional example of three-player game and calculations.

\begin{table}[t]
\centering
\begin{minipage}{0.48\textwidth}
\centering
\begin{tabular}{lllll}
$S$ & $v(S)$ & $\sum_{i \in S} SV_i$ & $IS_S$ & $US_S$ \\
\toprule
$\{1\}$ & 1 & 12 & 12 & 12 \\
$\{2\}$ & 3 & 6 & 6 & 6 \\
$\{3\}$ & 4 & 12 & 12 & 12 \\
\midrule
$\{1,2\}$ & 10 & 18 & 3 & 15 \\
$\{1,3\}$ & 21 & 24 & 8 & 16 \\
$\{2,3\}$ & 7 & 18 & 0 & 18 \\
\midrule
$\{1,2,3\}$ & 30 & 30 & 0 & 30\\
\end{tabular}
\end{minipage}
\begin{minipage}{0.48\textwidth}
\centering
\begin{tikzpicture}[
    x=0.7cm,y=0.7cm,
    vertex/.style={circle, draw, minimum size=6mm},
    brace/.style={decorate, decoration={brace, amplitude=10pt, raise=4pt}},
    innerbrace/.style={decorate, decoration={brace, amplitude=4pt, raise=2pt, mirror}}
]

% --- Nodes ---
\node[vertex] (n1) at (1,5) {1};
\node[vertex] (n2) at (5.95,4.29) {2};
\node[vertex] (n3) at (1.71,0.05) {3};

\coordinate (s13) at ($(n1)!0.74!(n3)!4pt!-90:(n3)$);
\coordinate (s31) at ($(n3)!0.74!(n1)!4pt!-90:(n1)$);
\coordinate (b31) at ($(n3)!0.25!(n1)!4pt!-90:(n1)$);

\coordinate (s12) at ($(n1)!0.78!(n2)!4pt!-90:(n2)$);
\coordinate (s21) at ($(n2)!0.37!(n1)!4pt!-90:(n1)$);
\coordinate (b12) at ($(n1)!0.60!(n2)!4pt!-90:(n2)$);

\coordinate (s23) at ($(n2)!0.33!(n3)!4pt!-90:(n3)$);
\coordinate (s32) at ($(n3)!0.67!(n2)!4pt!-90:(n2)$);
\coordinate (b23) at ($(n2)!0.50!(n3)!4pt!-90:(n3)$);

\coordinate (r12) at ($(n1)!0.08!(n2)!4pt!-90:(n2)$);
\coordinate (r21) at ($(n2)!0.08!(n1)!4pt!-90:(n1)$);
\coordinate (r13) at ($(n1)!0.08!(n3)!4pt!-90:(n3)$);
\coordinate (r31) at ($(n3)!0.08!(n1)!4pt!-90:(n1)$);
\coordinate (r23) at ($(n2)!0.07!(n3)!4pt!-90:(n3)$);
\coordinate (r32) at ($(n3)!0.07!(n2)!4pt!-90:(n2)$);

% --- Inner triangle ---
\draw[line width=0.5mm] (n1) -- node[pos=0.4, right, above=1pt] {\footnotesize $\cup$=15} (n2);
\draw[line width=0.5mm] (n1) -- node[pos=0.6, left=19pt, above] {\footnotesize $\cup$=16} (n3);
\draw[line width=0.5mm] (n2) -- node[pos=0.3, right, below=14pt] {\footnotesize $\cup$=18} (n3);

\draw[line width=0.3mm,black!50]
  (r13) --
  node[pos=0.25, left=0pt] {\tiny 12} 
  (s13);

\draw[line width=0.3mm,black!50]
  (r31) --
  node[pos=0.25, right=-2pt] {\tiny 12} 
  (s31);

\draw[line width=0.3mm,black!50]
  (r12) --
  node[pos=0.3, below=-1pt] {\tiny 12} 
  (s12);

\draw[line width=0.3mm,black!50]
  (r21) --
  node[pos=0.4, above=-1pt] {\tiny 6} 
  (s21);

\draw[line width=0.3mm,black!50]
  (r23) --
  node[pos=0.4, left=0pt] {\tiny 6} 
  (s23);

\draw[line width=0.3mm,black!50]
  (r32) --
  node[pos=0.35, right=1pt] {\tiny 12} 
  (s32);

\draw[black!50]
  ($(s23)!0.1cm!90:(r23)$) --
  ($(s23)!0.1cm!-90:(r23)$);

\draw[black!50]
  ($(s32)!0.1cm!90:(r32)$) --
  ($(s32)!0.1cm!-90:(r32)$);

\draw[black!50]
  ($(s13)!0.1cm!90:(r13)$) --
  ($(s13)!0.1cm!-90:(r13)$);
  
\draw[black!50]
  ($(s31)!0.1cm!90:(r31)$) --
  ($(s31)!0.1cm!-90:(r31)$);

\draw[black!50]
  ($(s12)!0.1cm!90:(r12)$) --
  ($(s12)!0.1cm!-90:(r12)$);
  
\draw[black!50]
  ($(s21)!0.1cm!90:(r21)$) --
  ($(s21)!0.1cm!-90:(r21)$);

\draw[line width=0.5mm,blue!50]
  ($(s12)!-0.14cm!90:(r12)$) --
  node[pos=0.6, below=2pt,color=blue!50] {\tiny $\cap$=3}
  ($(s21)!0.14cm!-90:(r21)$);

\draw[line width=0.5mm,blue!50]
  ($(s13)!-0.14cm!90:(r13)$) --
  node[pos=0.5, right=2pt,color=blue!50] {\tiny $\cap$=8}
  ($(s31)!0.14cm!-90:(r31)$);

\end{tikzpicture}

\end{minipage}\vspace{0.1cm}
\caption{The Union Shapley value and the Intersection Shapley value calculations for the game  $v = 1 u_{\{1\}} + 3 u_{\{2\}} + 4 u_{\{3\}} + 6 u_{\{1,2\}} + 16 u_{\{1,3\}}$. 
On the left-hand side, a table of values is shown. 
On the right-hand side, an illustration of the relationships between players is given: 
$\cup$ denotes the Union Shapley value, while $\cap$ denotes the Intersection Shapley value.
As we can see, contributions of players $1$ and $3$ highly overlap, while players $2$ and $3$ are independent.}
\label{table:example_n=3}
\end{table}

\section{Conclusions}

In this paper, we performed a comprehensive analysis of measures for group assessment in coalitional games.
We proposed an approach of sequential elimination in which the elements of the group are removed one by one, and their assessment are added.
We formalized this approach through two axiomatic characterizations and analyzed the extension of the Shapley value, named the Union Shapley value.
Furthermore, we characterized the class of group values and identified a corresponding approach that measures synergy rather than the value of a coalition.
In our analysis we introduced a new measure of synergy and revealed a novel connection between several group values proposed in the literature.

There are several possible directions for future work.
While the new measures have strong theoretical foundation, it is also important to test them empirically in real-world applications of the Shapley value---e.g., to assess interactions between features or evaluate groups of similar features in machine learning algorithms.
Moreover, we plan to study the computational aspects of the new measures and develop dedicated approximation algorithms.
Finally, it would be interesting to extend our definitions to more advanced models of cooperative games, such as graph-restricted games~\cite{Myerson:1977}.

\section*{Acknowledgements}
Piotr Kępczyński and Oskar Skibski were supported by the Polish National Science Centre Grant No. 2023/50/E/ST6/00665.

\newpage

\bibliography{bibliography, piotrek}

\begin{thebibliography}{26}
\providecommand{\natexlab}[1]{#1}
\providecommand{\url}[1]{\texttt{#1}}
\expandafter\ifx\csname urlstyle\endcsname\relax
  \providecommand{\doi}[1]{doi: #1}\else
  \providecommand{\doi}{doi: \begingroup \urlstyle{rm}\Url}\fi

\bibitem[Albert et~al.(2000)Albert, Jeong, and Barab{\'a}si]{Albert:etal:2000b}
R.~Albert, H.~Jeong, and A.-L. Barab{\'a}si.
\newblock Error and attack tolerance of complex networks.
\newblock \emph{Nature}, 406\penalty0 (6794):\penalty0 378--382, 2000.

\bibitem[Alshebli et~al.(2019)Alshebli, Michalak, Skibski, Wooldridge, and
  Rahwan]{Alshebli:etal:2019}
B.~K. Alshebli, T.~P. Michalak, O.~Skibski, M.~Wooldridge, and T.~Rahwan.
\newblock A measure of added value in groups.
\newblock \emph{ACM Transactions on Autonomous and Adaptive Systems},
  13\penalty0 (4):\penalty0 18:1--18:46, 2019.

\bibitem[Ballester et~al.(2006)Ballester, Calv{\'o}-Armengol, and
  Zenou]{Ballester:etal:2006}
C.~Ballester, A.~Calv{\'o}-Armengol, and Y.~Zenou.
\newblock Who's who in networks. wanted: The key player.
\newblock \emph{Econometrica}, 74\penalty0 (5):\penalty0 1403--1417, 2006.

\bibitem[Chen et~al.(2024)Chen, Chen, and Ye]{Chen:etal:2024}
D.~Chen, J.~Chen, and W.~Ye.
\newblock Why groups matter: Necessity of group structures in attributions.
\newblock In \emph{Proceedings of the 5th ACM International Conference on AI in
  Finance}, pages 63--71, 2024.

\bibitem[Darst et~al.(2018)Darst, Malecki, and Engelman]{Darst:etal:2018}
B.~F. Darst, K.~C. Malecki, and C.~D. Engelman.
\newblock Using recursive feature elimination in random forest to account for
  correlated variables in high dimensional data.
\newblock \emph{BMC Genetics}, 19\penalty0 (Suppl 1):\penalty0 65, 2018.

\bibitem[Deng and Papadimitriou(1994)]{Deng:Papadimitriou:1994}
X.~Deng and C.~H. Papadimitriou.
\newblock On the complexity of cooperative solution concepts.
\newblock \emph{Mathematics of Operations Research}, 19\penalty0 (2):\penalty0
  257--266, 1994.

\bibitem[Derks and Haller(1999)]{Derks:Haller:1999}
J.~J. Derks and H.~H. Haller.
\newblock Null players out? linear values for games with variable supports.
\newblock \emph{International Game Theory Review}, 1\penalty0 (03n04):\penalty0
  301--314, 1999.

\bibitem[Dubey et~al.(1981)Dubey, Neyman, and Weber]{Dubey:etal:1981}
P.~Dubey, A.~Neyman, and R.~J. Weber.
\newblock Value theory without efficiency.
\newblock \emph{Mathematics of Operations Research}, 6\penalty0 (1):\penalty0
  122--128, 1981.

\bibitem[Flores et~al.(2014)Flores, Molina, and Tejada]{Flores:etal:2014}
R.~Flores, E.~Molina, and J.~Tejada.
\newblock The shapley group value.
\newblock \emph{arXiv preprint arXiv:1412.5429}, 2014.

\bibitem[Ghorbani and Zou(2019)]{Ghorbani:Zou:2019}
A.~Ghorbani and J.~Zou.
\newblock Data shapley: Equitable valuation of data for machine learning.
\newblock In \emph{International Conference on Machine Learning}, pages
  2242--2251. PMLR, 2019.

\bibitem[Grabisch and Roubens(1999)]{Grabisch:Roubens:1999}
M.~Grabisch and M.~Roubens.
\newblock An axiomatic approach to the concept of interaction among players in
  cooperative games.
\newblock \emph{International Journal of Game Theory}, 28\penalty0
  (4):\penalty0 547--565, 1999.

\bibitem[Harris et~al.(2022)Harris, Pymar, and Rowat]{Harris:etal:2022}
C.~Harris, R.~Pymar, and C.~Rowat.
\newblock Joint shapley values: a measure of joint feature importance.
\newblock In \emph{International Conference on Learning Representations}, 2022.

\bibitem[Harsanyi(1963)]{Harsanyi:1963}
J.~C. Harsanyi.
\newblock A simplified bargaining model for the n-person cooperative game.
\newblock \emph{International Economic Review}, 4\penalty0 (2):\penalty0
  194--220, 1963.

\bibitem[Hart and Mas-Colell(1989)]{Hart:Mas-Colell:1989}
S.~Hart and A.~Mas-Colell.
\newblock Potential, value and consistency.
\newblock \emph{Econometrica}, 57:\penalty0 589--614, 1989.

\bibitem[Holme et~al.(2002)Holme, Kim, Yoon, and Han]{Holme:etal:2002}
P.~Holme, B.~J. Kim, C.~N. Yoon, and S.~K. Han.
\newblock Attack vulnerability of complex networks.
\newblock \emph{Physical review E}, 65\penalty0 (5):\penalty0 056109, 2002.

\bibitem[Khakzar et~al.(2021)Khakzar, Baselizadeh, Khanduja, Rupprecht, Kim,
  and Navab]{khakzar:etal:2021}
A.~Khakzar, S.~Baselizadeh, S.~Khanduja, C.~Rupprecht, S.~T. Kim, and N.~Navab.
\newblock Neural response interpretation through the lens of critical pathways.
\newblock In \emph{Proceedings of the IEEE/CVF conference on computer vision
  and pattern recognition}, pages 13528--13538, 2021.

\bibitem[Lin and Gao(2022)]{lin:etal:2022}
K.~Lin and Y.~Gao.
\newblock Model interpretability of financial fraud detection by group shap.
\newblock \emph{Expert Systems with Applications}, 210:\penalty0 118354, 2022.

\bibitem[Lundberg and Lee(2017)]{Lundberg:Lee:2017}
S.~M. Lundberg and S.-I. Lee.
\newblock A unified approach to interpreting model predictions.
\newblock In \emph{Advances in Neural Information Processing Systems}, pages
  4765--4774, 2017.

\bibitem[Marichal et~al.(2007)Marichal, Kojadinovic, and
  Fujimoto]{Marichal:etal:2007}
J.-L. Marichal, I.~Kojadinovic, and K.~Fujimoto.
\newblock Axiomatic characterizations of generalized values.
\newblock \emph{Discrete Applied Mathematics}, 155\penalty0 (1):\penalty0
  26--43, 2007.

\bibitem[Muschalik et~al.(2024)Muschalik, Baniecki, Fumagalli, Kolpaczki,
  Hammer, and H{\"u}llermeier]{Muschalik:etal:2024}
M.~Muschalik, H.~Baniecki, F.~Fumagalli, P.~Kolpaczki, B.~Hammer, and
  E.~H{\"u}llermeier.
\newblock shapiq: Shapley interactions for machine learning.
\newblock \emph{Advances in Neural Information Processing Systems},
  37:\penalty0 130324--130357, 2024.

\bibitem[Myerson(1977)]{Myerson:1977}
R.~B. Myerson.
\newblock Graphs and cooperation in games.
\newblock \emph{Mathematical Methods of Operations Research}, 2\penalty0
  (3):\penalty0 225--229, 1977.

\bibitem[Myerson(1980)]{Myerson:1980}
R.~B. Myerson.
\newblock Conference structures and fair allocation rules.
\newblock \emph{International Journal of Game Theory}, 9:\penalty0 169--82,
  1980.

\bibitem[Shapley(1953)]{Shapley:1953}
L.~S. Shapley.
\newblock A value for n-person games.
\newblock In H.~Kuhn and A.~Tucker, editors, \emph{Contributions to the Theory
  of Games}, volume~II, pages 307--317. Princeton University Press, 1953.

\bibitem[Shapley and Shubik(1954)]{Shapley:Shubik:1954}
L.~S. Shapley and M.~Shubik.
\newblock A method for evaluating the distribution of power in a committee
  system.
\newblock \emph{American Political Science Review}, 48\penalty0 (03):\penalty0
  787--792, 1954.

\bibitem[Sundararajan et~al.(2020)Sundararajan, Dhamdhere, and
  Agarwal]{sundararajan:etal:2020}
M.~Sundararajan, K.~Dhamdhere, and A.~Agarwal.
\newblock The shapley taylor interaction index.
\newblock In \emph{International conference on machine learning}, pages
  9259--9268. PMLR, 2020.

\bibitem[Weber(1988)]{Weber:1988}
R.~J. Weber.
\newblock Probabilistic values for games.
\newblock \emph{The Shapley Value. Essays in Honor of Lloyd S. Shapley}, pages
  101--119, 1988.

\end{thebibliography}
\bibliographystyle{abbrvnat}

\newpage
\appendix

\section{Alternative Forms of Group Values}\label{appendix:calculations}

In this section, we present the necessary calculations that show the equivalence between two forms of group values: as a weighted sum over values of coalitions and dividends.
We will heavily rely on the following definition of the Harsanyi dividends:
\begin{equation}\label{equation:harsanyi}
v=\sum_{S\subseteq N, S\not=\emptyset}\Delta_v(S)\cdot u_S, \quad \text{ where } \quad \Delta_v(S)=\sum_{T\subseteq S}(-1)^{|S|-|T|} v(T).
\end{equation}
We will also use two following lemmas.

\begin{lemma}\label{lemma:equation:us_transform}
For all sets $Q$ and $N$ such that $Q\subseteq N$ the following holds:
\[ 
\sum_{T: Q\subseteq T \subseteq N} \frac{(|T|-1)!(|N|-|T|)!}{|N|!} = \frac{1}{|Q|}
\]
\end{lemma}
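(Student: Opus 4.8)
The plan is to collapse the sum over subsets into a sum over a single size parameter and then evaluate it with a standard binomial identity. Write $n = |N|$ and $q = |Q|$. Since every $T$ in the sum satisfies $Q \subseteq T \subseteq N$, it is determined by which elements of $N \setminus Q$ it contains; hence the number of such $T$ with $|T| = t$ is exactly $\binom{n-q}{t-q}$, and $t$ ranges over $q, q+1, \dots, n$. Grouping terms by size thus gives
\[
\sum_{T: Q\subseteq T\subseteq N}\frac{(|T|-1)!(n-|T|)!}{n!} = \sum_{t=q}^{n}\binom{n-q}{t-q}\frac{(t-1)!(n-t)!}{n!}.
\]

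First I would simplify the summand. Expanding the binomial coefficient and cancelling the common factor $(n-t)!$ turns each term into $\frac{(n-q)!\,(t-1)!}{n!\,(t-q)!}$, so the whole sum equals $\frac{(n-q)!}{n!}\sum_{t=q}^{n}\frac{(t-1)!}{(t-q)!}$. Next I would observe that $\frac{(t-1)!}{(t-q)!}$ is a product of $q-1$ consecutive integers, namely $(t-1)(t-2)\cdots(t-q+1) = (q-1)!\binom{t-1}{q-1}$. This rewriting is what sets up the final identity.

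The key step is then the hockey-stick identity $\sum_{t=q}^{n}\binom{t-1}{q-1} = \binom{n}{q}$ (reindex $r = t-1$, so $r$ runs from $q-1$ to $n-1$). Substituting back, the sum becomes $\frac{(n-q)!}{n!}(q-1)!\binom{n}{q}$, and a direct cancellation collapses this to $\frac{(q-1)!}{q!} = \frac{1}{q} = \frac{1}{|Q|}$, as claimed. None of these steps is genuinely difficult; the only real choice is recognizing that the size-graded sum is governed by the hockey-stick identity, which is the crux of the argument.

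An alternative I would keep in mind uses the Beta integral $\frac{(t-1)!(n-t)!}{n!} = \int_0^1 x^{t-1}(1-x)^{n-t}\,dx$. Pulling the integral outside the sum and noting that $\sum_{T: Q\subseteq T\subseteq N} x^{|T|}(1-x)^{n-|T|} = x^{q}$ (each element of $Q$ forces a factor $x$, while each element of $N \setminus Q$ contributes $x + (1-x) = 1$), the integrand reduces to $x^{q-1}$, and integrating gives $\frac{1}{q}$ immediately. This route is shorter and explains transparently why the answer is $1/|Q|$, but it relies on the integral representation rather than on elementary counting.
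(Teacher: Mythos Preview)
Your main argument is correct and matches the paper's proof essentially step for step: both group by $|T|$, cancel the $(n-t)!$ factor, and finish with the hockey-stick/upper-summation identity (the paper writes it as $\sum_{i=0}^{b}\binom{a+i}{i}=\binom{a+b+1}{b}$, which is the same identity after the symmetry $\binom{a+i}{i}=\binom{a+i}{a}$ and the reindex $r=a+i$). The Beta-integral alternative you sketch is a genuinely different and slicker route that the paper does not use.
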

\begin{proof}
We have:
\begin{align*}
\sum_{T: Q\subseteq T \subseteq N}\frac{(|T|-1)!(|N|-|T|)!}{|N|!} = & \sum_{i=0}^{|N|-|Q|}{\binom{|N|-|Q|}{i}}\frac{(|Q|+i-1)!(|N|-|Q|-i)!}{|N|!} \\
= & \sum_{i=0}^{|N|-|Q|}\frac{(|N|-|Q|)!}{i!(|N|-|Q|-i)!}\frac{(|Q|+i-1)!(|N|-|Q|-i)!}{|N|!} \\
= & \frac{(|N|-|Q|)!}{|N|!}\sum_{i=0}^{|N|-|Q|}\frac{(|Q|+i-1)!}{i!} \\
= & \frac{(|N|-|Q|)!}{|N|!} \cdot \frac{|N|!}{|Q|(|N|-|Q|)!} = \frac{1}{|Q|}.
\end{align*}
Here, we used a well-known property of binomial coefficients: $\sum_{i = 0}^b \binom{a+i}{i} = \binom{a+b+1}{b}$ (the left hand side counts the selection of $b$ numbers from $\{1,\dots,a+b+1\}$ where the last element which is not chosen is $a+i+1$); this implies $\sum_{i = 0}^b \frac{(a+i)!}{i!} = \frac{(a+b+1)!}{(a+1) b!}$
\end{proof}

\begin{lemma}\label{lemma:equation:ms_transform}
For all sets $Q$, $S$ and $N$ such that $Q,S\subseteq N$ the following holds:
\[ 
\sum_{T:Q\setminus S\subseteq T\subseteq N\setminus S}\frac{|T|!(|N|-|T|-|S|)!}{(|N|-|S|+1)!}=\frac{1}{|Q|-|S\cap Q|+1}.
\]
\begin{proof}
Fix $e \not \in N$.
By adding $e$ to all sets $T$ we get:
\begin{align*} 
\sum_{T:Q\setminus S\subseteq T\subseteq N\setminus S}\frac{|T|!(|N|-|T|-|S|)!}{(|N|-|S|+1)!} = & \sum_{T: (Q\setminus S \cup \{e\})\subseteq T\subseteq (N\setminus S \cup \{e\})}\frac{(|T|-1)!(|N|-(|T|-1)-|S|)!}{(|N|-|S|+1)!} \\
= & \sum_{T:(Q\setminus S \cup \{e\})\subseteq T\subseteq (N\setminus S \cup \{e\})}\frac{(|T|-1)!(|N\setminus S \cup \{e\}|-|T|)!}{(|N\setminus S \cup \{e\}|)!} \\
= & \frac{1}{|Q \setminus S \cup \{e\}|} = \frac{1}{|Q\setminus S|+1}=\dfrac{1}{|Q|-|S\cap Q|+1}.
\end{align*}
where we used \cref{lemma:equation:us_transform}.

\end{proof}
\end{lemma}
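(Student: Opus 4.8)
The plan is to reduce this identity to \cref{lemma:equation:us_transform}, which I am allowed to assume. The two summands look almost identical: the left-hand side of \cref{lemma:equation:us_transform} sums $\frac{(|T|-1)!(|N|-|T|)!}{|N|!}$ over the interval of sets $Q \subseteq T \subseteq N$, whereas here I sum $\frac{|T|!(|N|-|T|-|S|)!}{(|N|-|S|+1)!}$ over $Q\setminus S \subseteq T \subseteq N\setminus S$. The difference is exactly a shift of every relevant cardinality by one, which strongly suggests introducing a single fresh element to absorb the shift and then invoking the earlier lemma verbatim.

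Concretely, I would fix an element $e \notin N$ and set $N' = (N\setminus S)\cup\{e\}$ and $Q' = (Q\setminus S)\cup\{e\}$, so that $|N'| = |N|-|S|+1$ and $|Q'| = |Q\setminus S|+1 = |Q|-|S\cap Q|+1$. The map $T \mapsto T\cup\{e\}$ is a bijection between $\{T : Q\setminus S \subseteq T \subseteq N\setminus S\}$ and $\{T' : Q' \subseteq T' \subseteq N'\}$, since every set in the target interval must contain $e$ (because $e \in Q'$). Writing $|T'| = |T|+1$, the summand of \cref{lemma:equation:us_transform} for the ground set $N'$ becomes $\frac{(|T'|-1)!(|N'|-|T'|)!}{|N'|!} = \frac{|T|!(|N|-|S|-|T|)!}{(|N|-|S|+1)!}$, which is precisely the summand appearing here. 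Applying \cref{lemma:equation:us_transform} to the pair $Q' \subseteq N'$ then yields $\frac{1}{|Q'|} = \frac{1}{|Q|-|S\cap Q|+1}$, as required.

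The calculation is essentially routine, so there is no deep obstacle; the only care needed is in the cardinality bookkeeping. I must verify that the index shift $|T'|=|T|+1$ aligns all three factorials correctly and that $e \notin N$ guarantees both that $T\mapsto T\cup\{e\}$ is a genuine bijection onto the required interval and that $|Q'| = |Q\setminus S|+1$ exactly. The final rewriting uses $|Q\setminus S| = |Q|-|S\cap Q|$, which holds for arbitrary $Q,S$. A direct alternative would be to mimic the proof of \cref{lemma:equation:us_transform} itself: expand the sum over the size $|T|$, count the admissible sets with $\binom{|N|-|S|-|Q\setminus S|}{i}$, and reapply the binomial identity $\sum_i \binom{a+i}{i}=\binom{a+b+1}{b}$; but the fresh-element reduction is cleaner and avoids re-deriving that identity.
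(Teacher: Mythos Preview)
Your proposal is correct and follows essentially the same approach as the paper: introduce a fresh element $e\notin N$, use the bijection $T\mapsto T\cup\{e\}$ to rewrite the sum over $Q\setminus S\subseteq T\subseteq N\setminus S$ as a sum over $Q'\subseteq T'\subseteq N'$ with $Q'=(Q\setminus S)\cup\{e\}$ and $N'=(N\setminus S)\cup\{e\}$, and then invoke \cref{lemma:equation:us_transform}. Your bookkeeping of cardinalities and the final rewriting $|Q\setminus S|=|Q|-|S\cap Q|$ match the paper's argument exactly.
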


Now, let us consider each group value separately. 

\subsection*{Union Shapley value}
For Union Shapley value, we have:
\[
  US_S(N,v) = \sum_{T \subseteq N} \frac{(|T|-1)!(|N|-|T|)!}{|N|!} (v(T) - v(T \setminus S)).
\]
Using \cref{equation:harsanyi} we get:
\begin{align*}
US_S(N,v) = & \sum_{T:T \subseteq N} \frac{(|T|-1)!(|N|-|T|)!}{|N|!} \left(\sum_{Q:Q\subseteq T}\Delta_v(Q)-\sum_{Q:Q\subseteq T\setminus S}\Delta_v(Q)\right) \nonumber\\
= & \sum_{T:T \subseteq N} \frac{(|T|-1)!(|N|-|T|)!}{|N|!} \left(\sum_{Q:S\cap Q\not=\emptyset, Q\subseteq T}\Delta_v(Q)\right) \nonumber\\ 
= & \sum_{Q:S\cap Q\not=\emptyset,Q\subseteq N}\Delta_v(Q)\sum_{T:Q\subseteq T\subseteq N}\frac{(|T|-1)!(|N|-|T|)!}{|N|!} \nonumber \\
= & \sum_{S\cap Q\not=\emptyset,Q\subseteq N}\frac{\Delta_v(Q)}{|Q|},
\end{align*}
where the last equality comes from \cref{lemma:equation:us_transform}.

\subsection*{Merge Shapley value}
For Merge Shapley value, we have:
\[
  MS_S(N, v)=\sum_{T\subseteq N\setminus S}\dfrac{|T|!(|N|-|T|-|S|)!}{(|N|-|S|+1)!}(v(T\cup S)-v(T)).
\]
Using \cref{equation:harsanyi} we get:
\begin{align*}
MS_S(N, v) = & \sum_{T:T\subseteq N\setminus S}\frac{|T|!(|N|-|T|-|S|)!}{(|N|-|S|+1)!}\left(\sum_{Q:Q\subseteq T\cup S}\Delta_v(Q)-\sum_{Q:Q\subseteq T}\Delta_v(Q)\right) \\
= & \sum_{T:T\subseteq N\setminus S}\frac{|T|!(|N|-|T|-|S|)!}{(|N|-|S|+1)!}\left(\sum_{Q:Q\subseteq T\cup S, Q\cap S\not=\emptyset}\Delta_v(Q)\right) \\
= & \sum_{Q:Q\cap S\not=\emptyset}\Delta_v(Q)\sum_{T:Q\setminus S\subseteq T\subseteq N\setminus S}\frac{|T|!(|N|-|T|-|S|)!}{(|N|-|S|+1)!} = \sum_{Q\cap S\not=\emptyset}\frac{\Delta_v(Q)}{|Q|-|S\cap Q|+1},
\end{align*}
where we used \cref{lemma:equation:ms_transform}.

\subsection*{Intersection Shapley value}
For Intersection Shapley value, we have:
\[
    IS_S(N,v) = \sum_{T:S \subseteq T \subseteq N} \frac{(|T|-1)!(|N|-|T|)!}{|N|!} \sum_{R:R\subseteq S}(-1)^{|R|}v(T\setminus R).
\]
Using \cref{equation:harsanyi} we get:
\begin{align}\label{equation:intersection_shapley}
IS_S(N,v) = & \sum_{T:S \subseteq T \subseteq N} \frac{(|T|-1)!(|N|-|T|)!}{|N|!} \sum_{R:R\subseteq S}(-1)^{|R|}\sum_{Q:Q\subseteq T\setminus R}\Delta_v(Q)\nonumber \\
= & \sum_{T:S \subseteq T \subseteq N} \frac{(|T|-1)!(|N|-|T|)!}{|N|!}\sum_{Q:Q\subseteq T}\Delta_v(Q)\sum_{R:R\subseteq S\setminus Q}(-1)^{|R|} \nonumber\\
= & \sum_{T:S \subseteq T \subseteq N} \frac{(|T|-1)!(|N|-|T|)!}{|N|!}\sum_{Q:S\subseteq Q\subseteq T}\Delta_v(Q)\nonumber \\
= & \sum_{Q:S\subseteq Q\subseteq N}\Delta_v(Q)\sum_{T:Q\subseteq T\subseteq N} \frac{(|T|-1)!(|N|-|T|)!}{|N|!} = \sum_{S\subseteq Q\subseteq N}\frac{\Delta_v(Q)}{|Q|},
\end{align}    
where we used \cref{lemma:equation:us_transform} to get the last equality.

\subsection*{Interaction Index}
For Interaction Index, we have:
\[
II_S(N, v) = \sum_{T:T\subseteq N\setminus S}\frac{|T|!(|N|-|T|-|S|)!}{(|N|-|S|+1)!}\sum_{R:R\subseteq S}(-1)^{|S|-|R|} 
v(T\cup R).
\]
Using \cref{equation:harsanyi} and \cref{lemma:equation:ms_transform} we get:
\begin{align*}
II_S(N, v) = & \sum_{T:T\subseteq N\setminus S}\frac{|T|!(|N|-|T|-|S|)!}{(|N|-|S|+1)!}\sum_{R:R\subseteq S}(-1)^{|S|-|R|} 
\sum_{Q:Q\subseteq T\cup R}\Delta_v(Q) \\
= & \sum_{T:T\subseteq N\setminus S}\frac{|T|!(|N|-|T|-|S|)!}{(|N|-|S|+1)!}\sum_{Q:Q\subseteq T\cup S}\Delta_v(Q) 
\sum_{R:Q\cap S\subseteq R\subseteq S}(-1)^{|S|-|R|} \\
= & \sum_{T:T\subseteq N\setminus S}\frac{|T|!(|N|-|T|-|S|)!}{(|N|-|S|+1)!}\sum_{Q:S\subseteq Q\subseteq T\cup S}\Delta_v(Q) \\
= & \sum_{Q:S\subseteq Q\subseteq N}\Delta_v(Q)\sum_{T:Q\setminus S\subseteq T\subseteq N\setminus S}\frac{|T|!(|N|-|T|-|S|)!}{(|N|-|S|+1)!} = \sum_{S\subseteq Q\subseteq N}\frac{\Delta_v(Q)}{|Q|-|S\cap Q|+1}.
\end{align*}

\section{Omitted Proofs}\label{appendix:proofs}

\subsection*{Proof of \cref{proposition:sequential_exists}}
\begin{proof}
$\Rightarrow:$  
Let $\varphi'$ be a (player) value that admits a sequential extension $\varphi$. We show that $\varphi'$ satisfies Balanced Contributions.  
For every game $(N,v)$ and every pair of players $i,j \in N$ with $i \neq j$, using the formula for the sequential extension, we obtain
\begin{align*}
\varphi'_i(N,v) + \varphi'_j(N\setminus\{i\},v) = \varphi_{\{i,j\}}(N,v) = \varphi'_j(N,v) + \varphi'_i(N\setminus\{j\},v).
\end{align*}

$\Leftarrow:$  
Let $\varphi'$ be a player value that satisfies Balanced Contributions. It is clear that $\varphi'$ admits a sequential extension if and only if, for every game $(N,v)$ and every coalition $S \subseteq N$, the value $\sum_{i \in S} \varphi'_i(N \setminus P_i^{\pi}, v)$ does not depend on the chosen permutation $\pi \in \Pi(S)$.

Consider altering a permutation $\pi \in \Pi(S)$ by swapping the positions of two neighboring players $j_1$ and $j_2$, that is, for $\pi = (\dots, j_1, j_2, \dots)$ we have $\pi' = (\dots, j_2, j_1, \dots)$.
Formally, for $j_1, j_2 \in S$ and $\pi \in \Pi(S)$ such that $\pi(j_1) + 1 = \pi(j_2)$, define $\pi' \in \Pi(S)$ by
\[
\pi'(i)=
\begin{cases}
    \pi(i)+1 & \text{if } i = j_1, \\
    \pi(i)-1 & \text{if } i = j_2, \\
    \pi(i)   & \text{otherwise.}
\end{cases}
\]
Let $T$ denote the set of players that appear before both $j_1$ and $j_2$ in the permutations, that is, $T = P^\pi_{j_1} = P^{\pi'}_{j_2}$.
Using Balanced Contributions, we obtain
\begin{align*}
\sum_{i\in S}\varphi'_i(N\setminus P^{\pi}_i,v)
&=\sum_{i\in S\setminus\{j_1,j_2\}}\varphi'_i(N\setminus P^\pi_i, v)+\varphi'_{j_1}(N\setminus P^\pi_{j_1},v) + \varphi'_{j_2}(N\setminus P^\pi_{j_2},v)\\
&=\sum_{i\in S\setminus\{j_1,j_2\}}\varphi'_i(N\setminus P^\pi_i, v)+\varphi'_{j_1}(N\setminus T,v) + \varphi'_{j_2}(N\setminus (T \cup \{j_1\}),v)\\
&=\sum_{i\in S\setminus\{j_1,j_2\}}\varphi'_i(N\setminus P^\pi_i, v)+\varphi'_{j_2}(N\setminus T,v) + \varphi'_{j_1}(N\setminus (T \cup \{j_2\}),v)\\
&=\sum_{i\in S\setminus\{j_1,j_2\}}\varphi'_i(N\setminus P^{\pi'}_i, v)+\varphi'_{j_2}(N\setminus P^{\pi'}_{j_2},v) + \varphi'_{j_1}(N\setminus P^{\pi'}_{j_1},v)\\
&=\sum_{i\in S}\varphi'_i(N\setminus P^{\pi'}_i,v).
\end{align*}
This shows that exchanging two neighboring players in a permutation does not alter the value.

Since any permutation can be obtained by swapping neighboring players, the value does not depend on the chosen permutation, which proves that $\varphi'$ admits a sequential extension.  
The extension is unique, as it is defined by a direct formula.
\end{proof}

\begin{lemma}\label{lemma:semivalue_npo_divideds_formula}
For every consistent (player) semivalue $\varphi^{\beta}$ there exists some infinite, non-negative collection of weights $(p_t)_{t\in \mathbb{N}}$ with $p_1=1$ such that
\begin{equation}
\varphi_i(N,v)=\sum_{T\subseteq N, i\in T}p_{|T|}\Delta_v(T).
\end{equation}
\end{lemma}
(Please note that unlike with the definition of group semivalues, not every collection of weights satisfying the criteria results in a valid player semivalue.)
\begin{proof}
Fix a set of players $N$.  
Using \cref{equation:harsanyi}, we rewrite the standard formula for semivalues in terms of dividends as follows:
\begin{align*}
\varphi^{\beta}_i(N, v)
&=\sum_{S \subseteq N, i \in S} \beta(|S|) (v(S)-v(S\setminus\{i\}))\\
&=\sum_{S:S \subseteq N, i \in S} \beta(|S|) \sum_{T:T\subseteq S, i\in T}\Delta_v(T)\\
&=\sum_{T:i\in T}\Delta_v(T)  \sum_{S:T\subseteq S}\beta(|S|)\\
&=\sum_{T:i\in T}\sum^{|N|}_{s=|T|}{|N|-|T|\choose{s-|T|}}\beta(s)\Delta_v(T)\\
&=\sum_{T:i\in T}\sum^{|N|-|T|}_{s=0}{|N|-|T|\choose{s}}\beta(s+|T|)\Delta_v(T).
\end{align*}
We can now define a collection of weights $(q^N_t)_{t\in[n]}$ by $q^N_t=\sum^{n-t}_{s=0}{n-t\choose s}\beta(s+t)$, and obtain that
\[
\varphi^{\beta}_i(N,v)=\sum_{i\in T}q^{N}_{|T|}\Delta_v(T)
\]
So far, we have proven that the formula holds for a fixed $N$. 
We need to show that the weights do not depend on $N$. 

To this end, let $N$ be an arbitrary set of players and $j \not \in N$ be an additional player. 
Consider a game $(N\cup\{j\},v)$ in which player $j$ is a null player. 
Since $\varphi^\beta$ satisfies Null Player Out, we get that for every non-empty set of players $T\subseteq N$ and player $i\in T$ it holds:
\begin{align*}
\varphi^\beta_i(N\cup\{j\},u_T) & = \varphi^\beta_i(N,u_T) \\
\sum_{Q\subseteq N\cup\{j\}, i\in Q}q^{N\cup\{j\}}_{|Q|}\Delta_{u_T}(Q) & = \sum_{Q\subseteq N, i\in Q}q^{N}_{|Q|}\Delta_{u_T}(Q) \\
q^{N\cup\{j\}}_{|T|} & = q^N_{|T|}.
\end{align*}
This shows that the weights do not depend on $N$. 
Hence, there exists an infinite set of weights $(p_t)_{t\in \mathbb{N}}$ (defined as $p_t=q^N_t$ for every set of players $N$ such that $|N|\geq t$) such that for every game $(N,v)$ and player $i\in N$ it holds:
\[
\varphi_i(N,v)=\sum_{T\subseteq N, i\in T}p_{|T|}\Delta_v(T).
\]

It remains to prove that the weights are non-negative and that $p_1=1$. 
Non-negativity follows directly from Monotonicity of $\varphi^{\beta}$:
\[
\varphi^\beta_i(N,u_N) = \sum_{T\subseteq N, i\in T}p_{|T|}\Delta_{u_N}(T) = p_t = \geq 0.
\]
In turn, $p_1 = 1$ follows from Dummy Player: in every game $(N,u_{\{i\}})$ player $i\in N$ is a dummy player, hence:
\[
\varphi^\beta_i(N,u_{\{i\}}) = \sum_{T\subseteq N, i\in T} p_{|T|} \Delta_{u_{\{i\}}}(T) = p_1 = 1.
\]
This concludes the proof.
\end{proof}

\subsection*{Proof of \cref{proposition:semivalues_bc}}
\begin{proof}
Let $\varphi$ be an arbitrary consistent semivalue.  
To show that it satisfies Balanced Contributions, we use \cref{lemma:semivalue_npo_divideds_formula}:
\begin{align*}
\varphi_i(N,v)-\varphi_i(N\setminus\{j\},v)
&=\sum_{T\subseteq N, i\in T}p_{|T|}\Delta_v(T)-\sum_{T\subseteq N\setminus\{j\}, i\in T}p_{|T|}\Delta_v(T)
=\sum_{T\subseteq N, i,j\in T}p_{|T|}\Delta_v(T)\\
&=\sum_{T\subseteq N, j\in T}p_{|T|}\Delta_v(T)-\sum_{T\subseteq N\setminus\{i\}, j\in T}p_{|T|}\Delta_v(T)=\varphi_j(N,v)-\varphi_j(N\setminus\{i\},v).
\end{align*}
    
\end{proof}

\subsection*{Proof of \cref{proposition:union_potential_axiom}}

\begin{proof}
Assume that the group value $\varphi$ satisfies Potential, and let $f$ be its potential function.

For $T\subseteq N$, \cref{equation:harsanyi} implies:
\[
\Delta_{u_\varphi}(T)=\sum_{Q\subseteq T}(-1)^{|T|-|Q|}u_\varphi(Q)=\sum_{Q\subseteq T}(-1)^{|T|-|Q|}\sum_{i\in Q}\varphi_{\{i\}}(Q, v).
\]
We can express $\varphi$ in terms of the potential function as follows:
\begin{align*}
\Delta_{u_\varphi}(T) = & \sum_{Q\subseteq T}(-1)^{|T|-|Q|}\sum_{i\in Q} \left(f(Q,v)-f(Q\setminus\{i\},v)\right) \\
= & \sum_{Q\subseteq T}f(Q, v)\left(|Q|(-1)^{|T|-|Q|}-(|T|-|Q|)(-1)^{|T|-(|Q|+1)}\right) \\
= & |T|\sum_{Q\subseteq T}f(Q, v)(-1)^{|T|-|Q|}.
\end{align*}
Combining \cref{equation:union_shapley,equation:harsanyi} with the definitions of the Union Shapley value, we obtain
\begin{align*}
US_S(N, u_\varphi) = & \sum_{T: S\cap T\not= \emptyset}\frac{\Delta_{u_\varphi}(T)}{|T|}=\sum_{T:S\cap T\not= \emptyset}\sum_{Q:Q\subseteq T}f(Q, v)(-1)^{|T|-|Q|} \\
= & \sum_{Q:Q\subseteq N}f(Q, v)\sum_{T:Q\subseteq T, S\cap T\not=\emptyset}(-1)^{|T|-|Q|} \\
= & \sum_{Q:Q\subseteq N}f(Q, v)\left(\sum_{T:Q\subseteq T\subseteq N}(-1)^{|T|-|Q|}-\sum_{T:Q\subseteq T\subseteq N\setminus S}(-1)^{|T|-|Q|}\right).
\end{align*}
Notice that both sums in the parentheses are equal to zero, unless $Q=N$ for the first sum and $Q=N\setminus S$ for the second sum.
Thus, the expression simplifies to
\[
US_S(N, u_\varphi) = f(N, v)-f(N\setminus S, v) = \varphi_S(N, v),
\]
which concludes the proof.
\end{proof}

\subsection*{Proof of \cref{proposition:union_potential_efficiency_axiomatization}}
\begin{proof}
We begin by showing that if a group value $\varphi$ satisfies Singleton-Efficiency and Potential or Balanced Contributions, then it extends the Shapley value.

Let $\varphi$ be a group value extending $\varphi'$. 
Clearly, if $\varphi$ satisfies Singleton-Efficiency, then $\varphi'$ satisfies Efficiency: for every game $(N,v)$, we have
\[
\sum_{i\in N} \varphi'_i(N,v) = \sum_{i\in N} \varphi_{\{i\}}(N,v) = v(N).
\]

Furthermore, we show that the same holds regarding Potential. 
Let $f$ be the potential of $\varphi$. 
For every game $(N,v)$ and every player $i\in N$,
\[
\varphi'_i(N,v)=\varphi_{\{i\}}(N,v)=f(N,v)-f(N\setminus\{i\},v).
\]
Hence, $f$ is also the potential function for $\varphi'$. 

Lastly, we show that the same is true for Balanced Contributions. If $\varphi$ satisfies Balanced Contributions, then
\begin{multline*}
\varphi'_i(N,v)-\varphi'_j(N\setminus\{i\},v)=\varphi_{\{i\}}(N,v)-\varphi_{\{j\}}(N\setminus\{i\},v)=\\
\varphi_{\{j\}}(N,v)-\varphi_{\{i\}}(N\setminus\{j\},v)=\varphi'_j(N,v)-\varphi'_i(N\setminus\{j\},v).
\end{multline*}

We have thus shown that if $\varphi$ satisfies Singleton-Efficiency and Potential, then $\varphi'$ satisfies Efficiency and Potential. 
From \cite{Hart:Mas-Colell:1989}, we know that $\varphi'$ is the Shapley value.
Analogously, if $\varphi$ satisfies Singleton-Efficiency and Balanced Contributions, then $\varphi'$ satisfies Efficiency and Balanced Contributions, and from \citet{Myerson:1977} we know that $\varphi'$ is the Shapley value.

Take any group value satisfying Singleton-Efficiency and Potential (or Singleton-Efficiency and Balanced Contributions). 
Our analysis shows that it extends the Shapley value.
In turn, \cref{theorem:sequential_potential_axiom,theorem:sequential_balanced_axiom} show that it is a sequential extension.
Since \cref{proposition:sequential_exists} establishes the existence of a unique sequential extension of the Shapley value, and the Union Shapley value is such an extension, we conclude that this unique value is the Union Shapley value.
\end{proof}

\subsection*{Proof of \cref{theorem:linearity}}
\begin{proof}
$\Rightarrow:$ Let $\varphi$ be a group value satisfying Linearity. 
Define $p^S_T=\varphi_S(N, u_T)$. 
From Linearity, \cref{equation:harsanyi}, and the definition of the Harsanyi dividends, we obtain
\begin{equation*}
\varphi_S(N, v) = 
\sum_{T\subseteq N, T\not=\emptyset}\Delta_v(T)\cdot\varphi_S(N, u_T)
= \sum_{T\subseteq N, T\not=\emptyset} p^S_T \Delta_v(T).
\end{equation*}

$\Leftarrow:$ Let $\varphi$ be a group value defined through a collection of real weights $(p^S_T)_{S, T\subseteq N}$. 
The linearity of the dividends implies
\begin{align*}
\varphi_S(N,v+w) 
&= \sum_{T\subseteq N, T\not=\emptyset} p^S_T \Delta_{v+w}(T) \\
&= \sum_{T\subseteq N, T\not=\emptyset} p^S_T \Delta_v(T) + \sum_{T\subseteq N, T\not=\emptyset} p^S_T \Delta_w(T) = \varphi_S(N, v) + \varphi_S(N, w).
\end{align*}

Similarly, for every $c\in \mathbb{R}$, using the linearity of the dividends again, we have
\[
\varphi_S(N, c\cdot v) 
= \sum_{T\subseteq N, T\not=\emptyset} p^S_T \Delta_{c\cdot v}(T) 
= c \cdot \sum_{T\subseteq N, T\not=\emptyset} p^S_T \Delta_v(T) 
= c \cdot \varphi_S(N, v).
\]
\end{proof}

\subsection*{Proof of \cref{theorem:symmetry}}
\begin{proof}
$\Rightarrow:$ 
Let $\varphi$ be a group value satisfying Linearity and Symmetry.
From \cref{theorem:linearity}, we know that it is of the form
\[
\varphi_S(N, v) = \sum_{T\subseteq N, T\not=\emptyset} \hat{p}^S_T \, \Delta_v(T).
\]
Moreover, for every $S, T \subseteq N$, we have $\hat{p}^S_T = \varphi_S(N, u_T)$.

We need to show that the weights depend only on the sizes of the sets $S$ and $T$ and on their intersection $S \cap T$. 
To this end, take four sets $S, S', T, T' \subseteq N$ such that $|S| = |S'|$, $|T| = |T'|$, and $|S \cap T| = |S' \cap T'|$.
Clearly, there exists a bijection $\pi : N \to N$ such that $\pi(S) = S'$ and $\pi(u_T) = u_{T'}$. 
By Symmetry, we then have
\[
\hat{p}^S_T = \varphi_S(N, u_T) = \varphi_{\pi(S)}(N, \pi(u_T)) = \varphi_{S'}(N, u_{T'}) = \hat{p}^{S'}_{T'}.
\]

$\Leftarrow:$ 
Let $\varphi$ be a group value defined through a collection of real weights $(p^{s,q}_t)_{s,t,q\in\mathbb{N}; q\le s, q\le t}$.
From \cref{theorem:linearity}, we know that $\varphi$ satisfies Linearity. 
It remains to verify Symmetry.
For every game $(N,v)$, bijection $\pi$, and set $S \subseteq N$, we have
\begin{align*}
\varphi_S(N, v) 
&= \sum_{T\subseteq N, T\not=\emptyset} p^{|S|, |S\cap T|}_{|T|} \, \Delta_v(T) \\
&= \sum_{T\subseteq N, T\not=\emptyset} p^{|\pi(S)|, |\pi(S)\cap \pi(T)|}_{|\pi(T)|} \, \Delta_{\pi(v)}(\pi(T)) \\
&= \varphi_{\pi(S)}(N, \pi(v)),
\end{align*}
which shows that $\varphi$ satisfies Symmetry.
\end{proof}

\subsection*{Proof of \cref{theorem:null-player}}
\begin{proof}
$\Rightarrow:$ 
Let $\varphi$ be a group value satisfying Linearity, Symmetry, and the Null Player axiom.
From \cref{theorem:symmetry}, we know that $\varphi$ is of the form
\[
\varphi_S(N, v) = \sum_{T\subseteq N, T\not=\emptyset} p^{|S|, |S\cap T|}_{|T|} \, \Delta_v(T).
\]
Moreover, for every $S, T \subseteq N$, we have $p^{|S|, |S\cap T|}_{|T|} = \varphi_S(N, u_T)$.

We need to show that the weight $p^{|S|, |S\cap T|}_{|T|}$ is zero if $|S \cap T| = 0$, and that it does not depend on $|S|$.
To this end, consider the game $u_T$, in which all players not in $T$ are null players.

First, let $S, T \subseteq N$ be such that $S \cap T = \emptyset$.
In game $u_T$, all players in $S$ are null players. 
By the Null Player axiom, we then have $p^{|S|,0}_{|T|} = \varphi_S(N, u_T) = 0$. 

Second, let $S, T \subseteq N$ be such that $S \cap T \neq \emptyset$.
Since all players not in $T$ are null players, the Null Player axiom gives
\[
p^{|S|, |S\cap T|}_{|T|} = \varphi_S(N, u_T) = \varphi_{S\cap T}(N, u_T) = p^{|S\cap T|, |S\cap T|}_{|T|},
\]
which shows that the weights do not depend on $|S|$, concluding the proof.

$\Leftarrow:$ 
Let $\varphi$ be a group value defined through a collection of real weights $(p^q_t)_{q,t\in\mathbb{N}, q \le t}$.
From \cref{theorem:symmetry}, we know that $\varphi$ satisfies Linearity and Symmetry.
It remains to verify the Null Player property.
For every game $(N, v)$, null player $i \in N$, and coalition $S \subseteq N$, we have
\begin{align*}
\varphi_S(N, v) 
&= \sum_{S \cap T \neq \emptyset} p^{|S \cap T|}_{|T|} \, \Delta_v(T) \\
&= \sum_{\substack{S \cap T \neq \emptyset \\ i \notin T}} p^{|S \cap T|}_{|T|} \, \Delta_v(T) + \sum_{\substack{S \cap T \neq \emptyset \\ i \in T}} p^{|S \cap T|}_{|T|} \, \Delta_v(T).
\end{align*}
Since dividends of all coalitions containing a null player are zero, the second sum is $0$, giving
\begin{align*}
\varphi_S(N, v) 
&= \sum_{\substack{S \cap T \neq \emptyset \\ i \notin T}} p^{|S \cap T|}_{|T|} \, \Delta_v(T) \\
&= \sum_{\substack{(S \cup \{i\}) \cap T \neq \emptyset \\ i \notin T}} p^{|(S \cup \{i\}) \cap T|}_{|T|} \, \Delta_v(T) \\
&= \sum_{(S \cup \{i\}) \cap T \neq \emptyset} p^{|(S \cup \{i\}) \cap T|}_{|T|} \, \Delta_v(T) - \sum_{\substack{(S \cup \{i\}) \cap T \neq \emptyset \\ i \in T}} p^{|(S \cup \{i\}) \cap T|}_{|T|} \, \Delta_v(T) \\
&= \sum_{(S \cup \{i\}) \cap T \neq \emptyset} p^{|(S \cup \{i\}) \cap T|}_{|T|} \, \Delta_v(T) = \varphi_{S \cup \{i\}}(N, v).
\end{align*}
Here, we used that the dividends of all coalitions containing the null player $i$ are zero.
\end{proof}

\subsection*{Proof of \cref{theorem:semivalues}}
\begin{proof}
$\Rightarrow:$ 
Let $\varphi$ be a group value satisfying Linearity, Symmetry, Dummy Player, Weak Monotonicity, and Null Player Out.
From \cref{theorem:null-player}, it holds that for a fixed $N$, $\varphi$ is of the form
\[
\varphi_S(N, v) = \sum_{T \subseteq N, S \cap T \neq \emptyset} p^{|S \cap T|}_{|T|} \, \Delta_v(T).
\]
Moreover, for every $S, T \subseteq N$, we have $p^{|S \cap T|}_{|T|} = \varphi_S(N, u_T)$.

It remains to show that the weights are independent of $N$, that $p^1_1 = 1$, and that all weights are non-negative.
Let $(p^q_t(N))_{q,t \in [n], q \le t}$ denote the collection of weights of $\varphi$ for a given $N$. 
For every set of players $N$ and player $i \notin N$, consider a game $(N \cup \{i\}, v)$ in which $i$ is a null player. 
Then, for every $q, t \in [n]$ with $q \le t$, take any coalitions $S, Q \subseteq N$ such that $|S \cap Q| = q$ and $|Q| = t$.
From Null Player Out, we obtain
\[ p^q_t(N) = p^{|S \cap Q|}_{|Q|}(N) = \varphi_S(N, u_Q) = \varphi_S(N \cup \{i\}, u_Q) = p^{|S \cap Q|}_{|Q|}(N \cup \{i\}) 
= p^q_t(N \cup \{i\}), \]
which shows that the weights do not change when adding a new player for a game.

Fix another set of players $N'$. Since we have just argued that adding a player does not change the weights, by repeatedly adding players we obtain
\[
p_t^q(N) = p_t^q(N \cup N') = p_t^q(N'),
\]
which shows that the weights do not depend on the set of players.

Next, for every $q, t \in \mathbb{N}$ with $q \le t$, fix sets of players $S, T \subseteq N$ and player $i \in N$ such that $|S \cap T| = q$ and $|T| = t$.  
Since the game $(N, u_T)$ is positive, Weak Monotonicity gives 
\[
\varphi_S(N, u_T) = p^{|S \cap T|}_{|T|} = p^q_t \ge 0.
\] 
Furthermore, in the game $(N, u_{\{i\}})$, player $i$ is a dummy player, so
\[
\varphi_{\{i\}}(N, u_{\{i\}}) = p^{|\{i\}|}_{|\{i\}|} = p^1_1 = 1.
\]

Hence, $\varphi$ is a group weak consistent semivalue.

$\Leftarrow:$ 
From \cref{theorem:null-player}, we know that $\varphi$ satisfies Linearity and Symmetry. 
Let us focus on Dummy Player, Weak Monotonicity and Null Player Out.
Fix $N,i \in N$ and $S \subseteq N \setminus \{i\}$.
It is known that for every game $(N, v)$ and dummy player $i\in N$, $\Delta_v(\{i\})=v(\{i\})$ and for all coalitions $\{i\}\subsetneq S$, $\Delta_v(S)=0$. This means that:
\[
\varphi_{\{i\}}(N, v)=\sum_{\{i\}\cap T\not=\emptyset}p^{|\{i\}\cap T|}_{|T|}\Delta_v(T)=p^1_1\Delta_v(\{i\})=v(\{i\}).
\]
For Dummy Player, it remains to be shown that $\varphi_{S\cup\{i\}}(N, v)=\varphi_S(N, v)+v(\{i\})$. 
We have:
\begin{align*}
\varphi_{S\cup\{i\}}(N, v)
= & \sum_{(S\cup\{i\})\cap T\not= \emptyset}p^{|(S\cup\{i\})\cap T|}_{|T|}\Delta_v(T)\\
= & \sum_{(S\cup\{i\})\cap T\not= \emptyset, i\in T}p^{|(S\cup\{i\})\cap T|}_{|T|}\Delta_v(T)+\sum_{(S\cup\{i\})\cap T\not= \emptyset, i\not\in T}p^{|(S\cup\{i\})\cap T|}_{|T|}\Delta_v(T).
\end{align*}
Since all dividends of coalitions with a dummy player are $0$ except of the dividend of the singleton coalition (equal to $v(\{i\})$), the left-hand sum simplifies to $p^1_1 \cdot v(\{i\}) = v(\{i\})$.
For the right-hand sum, note that $(S \cup \{i\}) \cap T = S \cap T$:
\begin{align*}
\varphi_{S\cup\{i\}}(N, v)
& = p_1^1 \cdot v(\{i\}) + \sum_{S \cap T \neq \emptyset, i \not \in T} p^{|S \cap T|}_{|T|}\Delta_v(T) \\
& = v(\{i\}) + \sum_{S \cap T \neq \emptyset} p^{|S \cap T|}_{|T|}\Delta_v(T) = v(\{i\}) + \varphi_S(N,v).
\end{align*}
Here, we used again the fact that $\Delta_v(T) = 0$ if $i \in T$.
This concludes the proof that $\varphi$ satisfies Dummy Player.

For Weak Monotonicity, we know that all the weights are non-negative.
For every positive game $(N,v)$ we know that the dividends are also non-negative. This means that for every coalition $S\subseteq N$ it holds that:
\[
\varphi_S(N, v)=\sum_{S\cap T\not= \emptyset}p^{|S\cap T|}_{|T|}\Delta_v(T)\geq 0.
\]
This shows that $\varphi$ satisfies also Weak Monotonicity.

We also directly show that $\varphi$ satisfies Null Player Out. For every game $(N,v)$, null player $i\in N$ and non-empty coalition $S$ we have
\[
\varphi_S(N,v)=\sum_{T\subseteq N, S\cap T\not=\emptyset}p^{|S\cap T|}_{|T|}\Delta_v(T)=\sum_{T\subseteq N\setminus\{i\}, S\cap T\not=\emptyset}p^{|S\cap T|}_{|T|}\Delta_v(T)=\varphi_S(N\setminus\{i\},v),
\]
which holds because all dividends for coalitions containing a null player are equal to zero.
This concludes the proof.
\end{proof}

\begin{lemma}\label{lemma:merged_dividends}
For every game $(N,v)$ and coalition $S\subseteq N$, the dividends of the merged game $(N \setminus S \cup \{[S]\}, v_{[S]})$ are equal to
\[
\Delta_{v_{[S]}}(T)=
\begin{cases}
    \sum_{R\subseteq S, R\not=\emptyset}\Delta_v(T\setminus\{[S]\}\cup R) & \text{if } [S]\in T,\\
    \Delta_v(T) & \text{otherwise. }
\end{cases}
\]
\begin{proof}
If $[S]\not\in Q$, then using \cref{equation:harsanyi} we get
\[
\Delta_{v_{[S]}}(Q)=\sum_{T\subseteq Q}(-1)^{|Q|-|T|} v_{[S]}(T)=\sum_{T\subseteq Q}(-1)^{|Q|-|T|} v(T)=\Delta_{v}(T)
\]
If $[S]\in Q$, then
\begin{align*}
\Delta_{v_{[S]}}(Q)
&=\sum_{T\subseteq Q}(-1)^{|Q|-|T|} v_{[S]}(T)\\
&=\sum_{T\subseteq Q\setminus \{[S]\}}(-1)^{|Q|-|T|} v_{[S]}(T)+\sum_{T\subseteq Q\setminus \{[S]\}}(-1)^{|Q|-|T\cup\{[S]\}|} v_{[S]}(T\cup\{[S]\})\\
&=-\sum_{T\subseteq Q\setminus \{[S]\}}(-1)^{|Q\setminus{\{[S]\}}|-|T|} v(T)+\sum_{T\subseteq Q\setminus \{[S]\}}(-1)^{|Q\setminus\{[S]\}|-|T|} v(T\cup S)\\
&=\sum_{T\subseteq Q\setminus \{[S]\}}(-1)^{|Q\setminus{\{[S]\}}|-|T|}\left(v(T\cup S)-v(T) \right)\\
&=\sum_{T\subseteq Q\setminus \{[S]\}}(-1)^{|Q\setminus{\{[S]\}}|-|T|}\left(\sum_{U\subseteq T\cup S}\Delta_v(U)-\sum_{U\subseteq T}\Delta_v(U)\right)\\
&=\sum_{T\subseteq Q\setminus \{[S]\}}(-1)^{|Q\setminus{\{[S]\}}|-|T|}\sum_{U\subseteq T}\sum_{R\subseteq S, R\not=\emptyset}\Delta_v(U\cup R)\\
&=\sum_{R\subseteq S, R\not=\emptyset}\sum_{T\subseteq Q\setminus \{[S]\}}(-1)^{|Q\setminus{\{[S]\}}|-|T|}\sum_{U\subseteq T}\Delta_v(U\cup R)\\
&=\sum_{R\subseteq S, R\not=\emptyset}\sum_{U\subseteq Q\setminus\{[S]\}}\Delta_v(U\cup R)\sum_{T\subseteq Q\setminus \{[S]\}, U\subseteq T}(-1)^{|Q\setminus{\{[S]\}}|-|T|}\\
&=\sum_{R\subseteq S, R\not=\emptyset}\Delta_v((Q\setminus\{[S]\})\cup R).
\end{align*}
Here, we used the fact that 
\[ \sum_{T\subseteq Q\setminus \{[S]\}, U\subseteq T}(-1)^{|Q\setminus{\{[S]\}}|-|T|} = \begin{cases} 1 & \mbox{if } U=Q\setminus\{[S]\},\\0 & \mbox{otherwise.}\end{cases}\] 
This concludes the proof.
\end{proof}

\end{lemma}

\begin{lemma}\label{lemma:gws:merge_weights}
A group weak consistent semivalue is a merge extension if its weights satisfy $p^q_t=p^{q+1}_{t+1}$.
\begin{proof}
From \cref{lemma:merged_dividends} and the weight dependency $p^q_t=p^{q+1}_{t+1}$, we get that $\varphi$ is a merge extension. 
Specifically, for every game $(N,v)$, and coalition $S\subseteq N$ we obtain
\begin{align*}
\varphi_{[S]}(N\setminus & S\cup\{[S]\}, v_{[S]})=
\sum_{T\subseteq N\setminus S\cup\{[S]\}, [S]\in T}p^{|[S]\cap T|}_{|T|}\Delta_{v_{[S]}}(T)\\
&=\sum_{T\subseteq N\setminus S\cup\{[S]\}, [S]\in T}p^1_{|T|}\sum_{R\subseteq S, R\not=\emptyset} \Delta_v(T\setminus \{[S]\}\cup R)\\
&=\sum_{T\subseteq N\setminus S\cup\{[S]\}, [S]\in T}\sum_{R\subseteq S, R\not=\emptyset}p^{|R|}_{|T\setminus\{[S]\}\cup R|} \Delta_v(T\setminus \{[S]\}\cup R)\\
&=\sum_{T\subseteq N\setminus S}\sum_{R\subseteq S, R\not=\emptyset}p^{|R|}_{|T\cup R|} \Delta_v(T\cup R)=\sum_{T\subseteq N, S\cap T\not= \emptyset}p^{|S\cap T|}_{|T|} \Delta_v(T)=\varphi_S(N,v).
\end{align*}
\end{proof}
\end{lemma}

\begin{lemma}\label{lemma:extending_semivalues}
Let $\varphi$ be a group weak consistent semivalue with weights $(p^q_t)_{q,t\in\mathbb{N},q\leq t}$ and $\varphi'$ a player consistent semivalue with weights $(p'_t)_{t\in\mathbb{N}}$ from \cref{lemma:semivalue_npo_divideds_formula}. It holds that $\varphi$ is an extension of $\varphi'$ if and only if $p^1_t=p'_t$ for every $t\in\mathbb{N}$.

\begin{proof}
$\Rightarrow:$ 
Let $\varphi$ be an arbitrary group value that extends $\varphi'$.
Fix an arbitrary $t\in\mathbb{N}$ and take any set of players $N$ and $S \subseteq N$ such that $|S|=t$.
From the fact that $\varphi$ extends $\varphi'$ we obtain
\[
p^1_t=\sum_{T\subseteq N, i\in T}p^1_{|T|}\Delta_{u_S}(T)=\varphi_{\{i\}}(N,u_S)=\varphi'_i(N,u_S)=\sum_{T\subseteq N, i\in T}p'_{|T|}\Delta_{u_S}(T)=p'_t
\]

$\Leftarrow:$ For every game $(N,v)$ and player $i\in N$ we get
\[
\varphi_{\{i\}}(N,v)=\sum_{T\subseteq N, i\in T}p^1_{|T|}\Delta_{v}(T)=\sum_{T\subseteq N, i\in T}p'_{|T|}\Delta_{v}(T)=\varphi_i(N,v)
\]
\end{proof}
\end{lemma}

\subsection*{Proof of \cref{theorem:semivalues_player_extensions}}
\begin{proof}
Let $\varphi'$ be the player-consistent semivalue with weights $(p'_t)_{t \in \mathbb{N}}$ from \cref{lemma:semivalue_npo_divideds_formula}.
It is clear that the sum, merge, and sequential extensions are unique if they exist, since they are all defined by formulas based on player values.
Hence, it remains to show that each of these extensions exists and is a group weakly consistent semivalue.

\paragraph{Sum extension}
Let $\varphi$ be a group weak consistent semivalue with weights $p^q_t=q\cdot p'_t$. From \cref{lemma:extending_semivalues}, $\varphi$ extends $\varphi'$ since $p^1_t=1\cdot p'_t=p'_t$ for every $t\in\mathbb{N}$. 

We can directly show that $\varphi$ is a sum extension. For every game $(N,v)$ and coalition $S\subseteq N$
\begin{align*}
\varphi_S(N,v)
&=\sum_{T\subseteq N, S\cap T\not=\emptyset}p^{|S\cap T|}_{|T|}\Delta_v(T)
=\sum_{T\subseteq N, S\cap T\not=\emptyset}|S\cap T|\cdot p'_{|T|}\Delta_v(T)\\
&=\sum_{i\in S}\sum_{T\subseteq N, i\in T}p'_{|T|}\Delta_v(T)=\sum_{i\in S}\varphi'_i(N,v)
\end{align*}

\paragraph{Merge extension}
Let $\varphi$ be a group weak consistent semivalue with weights $p^q_t=p'_{t-q+1}$. From \cref{lemma:extending_semivalues}, $\varphi$ extends $\varphi'$ as it holds that $p^1_t=p'_{t-1+1}=p'_t$ for every $t\in\mathbb{N}$. 
From \cref{lemma:gws:merge_weights}, we get that $\varphi$ is a merge extension since $p^q_t=p'_{t-q+1}=p'_{(t+1)-(q+1)+1}=p^{q+1}_{t+1}$.

\paragraph{Sequential extension}
Let $\varphi$ be a group weakly consistent semivalue with weights $p_t^q = p'_t$.
By \cref{lemma:extending_semivalues}, $\varphi$ extends $\varphi'$, since $p_t^1 = p'_t$ for every $t \in \mathbb{N}$.
To prove that $\varphi$ is a sequential extension, we show that it satisfies Balanced Contributions, which, together with \cref{theorem:sequential_balanced_axiom}, proves the thesis.
For every game $(N,v)$ and coalitions $S, Q \subseteq N$, it holds that

\begin{align*}
\varphi_S(N,v)-\varphi_{S\setminus Q}(N\setminus Q,v)
&=\sum_{T\subseteq N, S\cap T\not=\emptyset}p^{|S\cap T|}_{|T|}\Delta_v(T)-\sum_{T\subseteq N\setminus Q, (S\setminus Q)\cap T\not=\emptyset}p^{|(S\setminus Q)\cap T|}_{|T|}\Delta_v(T)\\
&=\sum_{T\subseteq N, S\cap T\not=\emptyset}p'_{|T|}\Delta_v(T)-\sum_{T\subseteq N\setminus Q, S\cap T\not=\emptyset}p'_{|T|}\Delta_v(T)\\
&=\sum_{T\subseteq N, S\cap T\not=\emptyset, Q\cap T\not=\emptyset}p'_{|T|}\Delta_v(T)\\
&=\sum_{T\subseteq N, Q\cap T\not=\emptyset}p'_{|T|}\Delta_v(T)-\sum_{T\subseteq N\setminus S, Q\cap T\not=\emptyset}p'_{|T|}\Delta_v(T)\\
&=\sum_{T\subseteq N, Q\cap T\not=\emptyset}p^{|Q\cap T|}_{|T|}\Delta_v(T)-\sum_{T\subseteq N\setminus S, (Q\setminus S)\cap T\not=\emptyset}p^{|(Q\setminus S)\cap T|}_{|T|}\Delta_v(T)\\
&=\varphi_Q(N,v)-\varphi_{Q\setminus S}(N\setminus S,v)
\end{align*}

\end{proof}

\subsection*{Proof of \cref{proposition:merge_axioms}}
\begin{proof}
Every group weak consistent semivalue is of the form
\begin{equation*}
\varphi_S(N, v) = \sum_{T \subseteq N, S \cap T\not=\emptyset}p^{|S\cap T|}_{|T|} \Delta_v(T).
\end{equation*}
for some collection of weights $(p^q_t)_{q,t\in\mathbb{N};q\leq t}$.

From \cite{Marichal:etal:2007}, every generalized value is of the form
\begin{equation*}
\varphi_S(N, v) = \sum_{T \subseteq N, S \cap T\not=\emptyset}\hat{p}^{|S|}_{|T\setminus S|}\Delta_v(T).
\end{equation*}
for some collection of weights $(\hat{p}^q_t)_{q,t\in[n];q\leq t}$.

Now, for every $q,t\in\mathbb{N}, q\leq t$ take any sets of players $S, Q$ and $N$ such that $S\not\subseteq Q$, $|S\cap Q|=q$, $|Q|=t$, and $S,Q\subseteq N$. Let $i$ be a player such that $i\in S \setminus Q$. Then
\begin{align*}
p^q_t=p^{|S\cap Q|}_{|Q|}
&=\sum_{T \subseteq N, S \cap T\not=\emptyset}p^{|S\cap T|}_{|T|} \Delta_{u_Q}(T)=\varphi_S(N,u_Q)\\
&=\sum_{T \subseteq N, S \cap T\not=\emptyset}\hat{p}^{|S|}_{|T\setminus S|}\Delta_{u_Q}(T)=\hat{p}^{|S|}_{|Q\setminus S|}\\
&=\hat{p}^{|S|}_{|(Q\cup\{i\})\setminus S|}=\sum_{T \subseteq N, S \cap T\not=\emptyset}\hat{p}^{|S|}_{|T\setminus S|}\Delta_{u_{Q\cup\{i\}}}(T)\\
&=\varphi_S(N,u_{Q\cup\{i\}})=\sum_{T \subseteq N, S \cap T\not=\emptyset}p^{|S\cap T|}_{|T|}\Delta_{u_{Q\cup\{i\}}}(T)\\
&=p^{|S\cap (Q\cup\{i\})|}_{|Q\cup\{i\}|}=p^{q+1}_{t+1}.
\end{align*}
This proves that $\varphi$ is a merge extension directly from \cref{lemma:gws:merge_weights}.
\end{proof}

\subsection*{Proof of \cref{proposition:efficiency_equivalence}}
\begin{proof}
First we will show the equivalence between $\varphi$ extending the Shapley value and having its weights satisfy $p^1_t=1/t$.

Let $(p'_t)_{t\in\mathbb{N}}$ be the weights of the Shapley value in terms of \cref{lemma:semivalue_npo_divideds_formula}. 
Recall that $SV_i(N,v)=\sum_{i\in T}\frac{\Delta_v(T)}{|T|}$, so it follows that $p'_t=1/t$.
From \cref{lemma:extending_semivalues}, $\varphi$ extends the Shapley value if and only if $p^1_t=p'_t=1/t$, proving the equivalence.

Now we will show the equivalence between $\varphi$ satisfying Singleton-Efficiency and having its weights satisfy $p^1_t=1/t$.
With this restriction on weights it satisfies Singleton-Efficiency, since then for every game $(N,v)$ 
\[
\sum_{i\in N}\varphi_{\{i\}}(N,v)=\sum_{i\in N}\sum_{T\subseteq N,i\in T}p^{|\{i\}\cap T|}_{|T|}\Delta_v(T)=\sum_{i\in N}\sum_{T\subseteq N,i\in T}\frac{1}{|T|}\Delta_v(T)=\sum_{T\subseteq N}\Delta_v(T)=v(N).
\]
Now we will show that Singleton-Efficiency implies those weight restrictions. 
Recall that from \cref{theorem:semivalues}, $\varphi$ satisfies Dummy Player and Symmetry. 
For every $t\in \mathbb{N}$ consider player sets $S, N$ and player $i\in S$, such that $|S|=t$ and $S\subseteq N$. 
In a game $u_S$ all players outside of $S$ are null players, and since $\varphi$ satisfies Dummy Player their singletons' values are 0. 
All players in $S$ are symmetric in this game, and since $\varphi$ satisfies Symmetry, their singletons' values are be equal. 
Since $\varphi$ satisfies Singleton-Efficiency, the sum of all singletons' values are equal to $u_S(N)=1$. 
Then all singletons of players in $S$ receive $\frac{1}{|S|}=\frac{1}{t}$ and
\[
\frac{1}{t}=\varphi_{\{i\}}(N,u_S)=\sum_{T\subseteq N,S\cap T\not=\emptyset}p^{|\{i\}\cap T|}_{|T|}\Delta_{u_S}(T)=p^1_{|S|}=p^1_t,
\]
concluding the proof.
\end{proof}

\subsection*{Proof of \cref{theorem:axioms_union}}
\begin{proof}
$\Rightarrow:$
The Union Shapley value is a group weak consistent semivalue, so from \cref{theorem:semivalues} it satisfies Linearity, Symmetry, Dummy Player, Weak Monotonicity and Null Player Out. 
Since it extends the Shapley value, from \cref{proposition:efficiency_equivalence} it satisfies Singleton-Efficiency. 
It remains to show it satisfies Group Equality.

For every game $(N,v)$ and non-empty coalitions $S,Q\subseteq N$ using \cref{equation:union_shapley} we obtain
\[
US_S(N,u_N)=\sum_{T\subseteq N, S\cap T\not=\emptyset}\frac{\Delta_{u_N}(T)}{|T|}=\frac{1}{|N|}=\sum_{T\subseteq N, Q\cap T\not=\emptyset}\frac{\Delta_{u_N}(T)}{|T|}=US_Q(N,u_N)
\]
$\Leftarrow:$
Let $\varphi$ be a group value satisfying these axioms. 
it satisfies Linearity, Symmetry, Dummy Player, Weak Monotonicity and Null Player Out it is a group weak consistent semivalue from \cref{theorem:semivalues}. 
Since it satisfies Singleton-Efficiency, from \cref{proposition:efficiency_equivalence} it holds that weights of $\varphi$ satisfy $p^1_t=1/t$. Now we will show that all weights satisfy $p^q_t=1/t$, proving $\varphi$ is the Union Shapley value from \cref{theorem:semivalues}. For every $q,t\in \mathbb{N}$ such that $q\leq t$ choose sets of players $S,N$ and player $i\in N$ such that $S\subseteq N,|S|=q, |N|=t$. Then using Group Equality we get
\begin{align*}
p^q_t
&=p^{|S\cap N|}_{|N|}=\sum_{T\subseteq N, S\cap T\not=\emptyset}p^{|S\cap T|}_{|T|}\Delta_{u_N}(T)=\varphi_S(N,u_N)\\
&=\varphi_{\{i\}}(N,u_N)=\sum_{T\subseteq N, \{i\}\cap T\not=\emptyset}p^{|\{i\}\cap T|}_{|T|}\Delta_{u_N}(T)=p^{|\{i\}\cap N|}_{|N|}=p^1_t=1/t.
\end{align*}
\end{proof}

\subsection*{Proof of \cref{theorem:axioms_sum}}
\begin{proof}
$\Rightarrow:$
The sum extension of the Shapley value is a group weak consistent semivalue from \cref{theorem:semivalues_player_extensions}, so from \cref{theorem:semivalues} it satisfies Linearity, Symmetry, Dummy Player, Weak Monotonicity and Null Player Out. 
Since it extends the Shapley value, from \cref{proposition:efficiency_equivalence} it satisfies Singleton-Efficiency. It remains to show it satisfies Group Proportionality. Let $\varphi$ be the sum extension of the Shapley value.

For every game $(N,v)$ and non-empty coalitions $S,T\subseteq N$ we have
\begin{align*}
\frac{\varphi_S(N,u_N)}{|S|}&=\frac{1}{|S|}\sum_{i\in S}SV_i(N,u_N)=\frac{1}{|S|}\frac{|S|}{|N|}=\frac{1}{|N|}\\
&=\frac{1}{|T|}\frac{|T|}{|N|}=\frac{1}{|T|}\sum_{i\in T}SV_i(N,u_N)=\frac{\varphi_T(N,u_N)}{|T|}
\end{align*}
$\Leftarrow:$
Let $\varphi$ be a group value satisfying these axioms. Since it satisfies Linearity, Symmetry, Dummy Player, Weak Monotonicity and Null Player Out it is a group weak consistent semivalue from \cref{theorem:semivalues}. 
Since it satisfies Singleton-Efficiency, from \cref{proposition:efficiency_equivalence} it holds that weights of $\varphi$ satisfy $p^1_t=1/t$. Now we will show that all weights satisfy $p^q_t=q/t$, proving $\varphi$ is the sum extension of the Shapley value from \cref{theorem:semivalues}. For every $q,t\in \mathbb{N}$ such that $q\leq t$ choose sets of players $S,N$ and player $i\in N$ such that $S\subseteq N,|S|=q, |N|=t$. Then using Group Proportionality we get
\begin{align*}
p^q_t
&=p^{|S\cap N|}_{|N|}=\sum_{T\subseteq N, S\cap T\not=\emptyset}p^{|S\cap T|}_{|T|}\Delta_{u_N}(T)=\varphi_S(N,u_N)\\
&=|S|\cdot\varphi_{\{i\}}(N,u_N)=|S|\cdot \sum_{T\subseteq N, \{i\}\cap T\not=\emptyset}p^{|\{i\}\cap T|}_{|T|}\Delta_{u_N}(T)=|S|\cdot p^{|\{i\}\cap N|}_{|N|}=q\cdot p^1_t=q/t.
\end{align*}
\end{proof}

\subsection*{Proof of \cref{theorem:synergistic_semivalues}}
\begin{proof}
$\Rightarrow:$ Let $\varphi$ be a group value satisfying Linearity, Symmetry, Dummifying Player, Weak Monotonicity and Null Player Out. From \cref{theorem:symmetry} it holds that for a fixed $N$, $\varphi$ is of the form
\[
\varphi_S(N, v) = \sum_{T\subseteq N, T\not=\emptyset}p^{|S|, |S\cap T|}_{|T|} \Delta_v(T).
\]
Moreover, for every $S, T\subseteq N$, we get $p^{|S|, |S\cap T|}_{|T|} = \varphi_S(N, u_T)$.

First, fix $N$ and note that for every coalition $S$ and $T$ such that $S\not\subseteq T$ there is at least one player $i$ such that $i\in S$ and $i\not\in T$. 
In game $u_T$, player $i$ is a null player which from Dummifying Player implies that the value of coalition $S$ is also zero:
\[
\varphi_S(N, u_T) = p^{|S|, |S\cap T|}_{|T|} = 0.
\]
This concludes the proof that only dividends for which $S\subseteq T$ have non-zero weights. 
Moreover, notice that for $S\subseteq T$, $S\cap T=S$, which means that the information about the size of the intersection is redundant. 
All together gives us the form:
\[
\varphi_S(N, v)=\sum_{S\subseteq T \subseteq N}p^{|S|}_{|T|} \Delta_v(T)
\]
In particular, for every $S, T\subseteq N$ such that $S\subseteq T$, we get $p^{|S|}_{|T|}=\varphi_S(N, u_T)$.

It remains to prove that weights are independent of $N$, $p^1_1=1$ and that all weights are non-negative.
Let $(p^q_t(N))_{q,t\in [n],q \leq t}$ be the collection of weights of $\varphi$ for a given $N$. For every set of players $N$ and player $i\not\in N$, consider a game $(N\cup\{i\},v)$, where $i$ is a null player. Then for every $q,t\in[n], q\leq t$ take any coalitions $S,Q\subseteq N$, such that $S\subseteq Q, |S|=q$ and $|Q|=t$ and through Null Player Out it holds that
\begin{align*}
p^q_t(N)=p^{|S|}_{|Q|}(N)=\varphi_S(N,u_Q)=\varphi_S(N\cup\{i\},u_Q)=p^{|S|}_{|Q|}(N\cup\{i\})=p^q_t(N\cup\{i\}),
\end{align*}
which shows that the weights do not change when adding a new player for a game.

Fix another set of players $N'$. Since we have just argued that adding a player does not change the weights, by repeatedly adding players we obtain
\[
p_t^q(N) = p_t^q(N \cup N') = p_t^q(N'),
\]
which shows that the weights do not depend on the set of players.

Next, for every $q,t\in\mathbb{N}$ with $q\leq t$, fix sets of players $S,T,\subseteq N$ and player $i\in N$, such that $S\subseteq T$, $|S|=q$, $|T|=t$. Since the game $(N,u_T)$ is positive, Weak Monotonicity gives
\[
\varphi_S(N,u_T)=p^{|S|}_{|T|}=p^q_t\geq 0. 
\]
Furthermore, in the game $(N,u_{\{i\}})$, player $i$ is a dummy player, so
\[
\varphi_{\{i\}}(N,u_{\{i\}})=p^{|\{i\}|}_{|\{i\}|}=p^1_1=1.
\]
Hence, $\varphi$ is a synergy semivalue.

$\Leftarrow:$ 
We know that $\varphi$ satisfies Linearity and Symmetry from \cref{theorem:symmetry}. 
Let us focus on Dummifying Player, Weak Monotonicity and Null Player Out.

It is known that for every game $(N, v)$ and dummy player $i\in N$, $\Delta_v(\{i\})=v(\{i\})$ and for all coalitions $\{i\}\subsetneq S$, $\Delta_v(S)=0$. 
This means that:
\[
\varphi_{\{i\}}(N, v)=\sum_{\{i\}\subseteq T\subseteq N}p^{|\{i\}|}_{|T|}\Delta_v(T)=p^1_1\Delta_v(\{i\})=v(\{i\})
\]
and also for every non-singleton coalition $S$ such that $i \in S$:
\[
\varphi_S(N, v)=\sum_{S\subseteq T}p^{|S|}_{|T|}\Delta_v(T)=0.
\]
This proves that $\varphi$ satisfies Dummifying Player.

We know that all the weights are non-negative. For every positive game $(N,v)$ we know that the dividends are also non-negative. This means that for every coalition $S\subseteq N$ it holds that:
\[
\varphi_S(N, v)=\sum_{S\subseteq T}p^{|S|}_{|T|}\Delta_v(T)\geq 0,
\]
This concludes the proof that $\varphi$ satisfies also Weak Monotonicity.

We also directly show that $\varphi$ satisfies Null Player Out. For every game $(N,v)$, null player $i\in N$ and non-empty coalition $S$ we have
\[
\varphi_S(N,v)=\sum_{S\subseteq T\subseteq N}p^{|S\cap T|}_{|T|}\Delta_v(T)=\sum_{S\subseteq T\subseteq N\setminus\{i\}}p^{|S\cap T|}_{|T|}\Delta_v(T)=\varphi_S(N\setminus\{i\},v),
\]
since, as you may recall, all dividends for coalitions containing a null player are equal to zero.
\end{proof}

\begin{lemma}\label{lemma:efficiency_synergy}
If a synergy semivalue satisfies Singleton-Efficiency, its weights satisfy $p^1_t=1/t$.
\begin{proof}
Let $\varphi$ be a synergy semivalue satisfying Singleton-Efficiency.
Recall that from \cref{theorem:synergistic_semivalues}, $\varphi$ satisfies Dummifying Player and Symmetry. For every $t\in \mathbb{N}$ consider player sets $S, N$ and player $i\in S$, such that $|S|=t$ and $S\subseteq N$. In a game $u_S$ all players outside of $S$ are null players, and since $\varphi$ satisfies Dummifying Player their singletons' values are 0. All players in $S$ are symmetric in this game, and since $\varphi$ satisfies Symmetry, their singletons' values are equal. Since $\varphi$ satisfies Singleton-Efficiency, the sum of all singletons' values are equal to $u_S(N)=1$. Then all singletons of players in $S$ receive $\frac{1}{|S|}=\frac{1}{t}$ and
\[
\frac{1}{t}=\varphi_{\{i\}}(N,u_S)=\sum_{\{i\}\subseteq T\subseteq N}p^{|\{i\}|}_{|T|}\Delta_{u_S}(T)=p^1_{|S|}=p^1_t,
\]
concluding the proof.
\end{proof}
\end{lemma}

\subsection*{Proof of \cref{proposition:axioms_intersection}}
\begin{proof}
$\Rightarrow:$
The Intersection Shapley value is a synergy semivalue, so from \cref{theorem:synergistic_semivalues} it satisfies Linearity, Symmetry, Dummifying Player, Weak Monotonicity and Null Player Out. 
First we show it satisfies Singleton-Efficiency. 
For every game $(N,v)$
\[
\sum_{i\in N}IS_{\{i\}}(N,v)=\sum_{i\in N}SV_i(N,v)=v(N).
\]
It remains to show it satisfies Group Equality.

For every game $(N,v)$ and non-empty coalitions $S,Q\subseteq N$ using \cref{equation:intersection_shapley} we obtain
\[
IS_S(N,u_N)=\sum_{S\subseteq T\subseteq N}\frac{\Delta_{u_N}(T)}{|T|}=\frac{1}{|N|}=\sum_{Q\subseteq T\subseteq N}\frac{\Delta_{u_N}(T)}{|T|}=IS_Q(N,u_N)
\]
$\Leftarrow:$
Let $\varphi$ be a group value satisfying these axioms. Since it satisfies Linearity, Symmetry, Dummifying Player, Weak Monotonicity and Null Player Out it is a synergy semivalue from \cref{theorem:synergistic_semivalues}, and let its weights be $(p^q_t)_{p,q\in\mathbb{N}, q\leq t}$. 
Since $\varphi$ satisfies Singleton-Efficiency, from \cref{lemma:efficiency_synergy} it holds that $p^1_t=1/t$. 
Now we will show that all weights satisfy $p^q_t=1/t$, proving $\varphi$ is the Intersection Shapley value from \cref{theorem:synergistic_semivalues}. 
For every $q,t\in \mathbb{N}$ such that $q\leq t$ choose sets of players $S,N$ and player $i\in N$ such that $S\subseteq N,|S|=q, |N|=t$. Then using Group Equality we get
\begin{align*}
p^q_t
&=p^{|S|}_{|N|}=\sum_{S\subseteq T\subseteq N}p^{|S|}_{|T|}\Delta_{u_N}(T)=\varphi_S(N,u_N)\\
&=\varphi_{\{i\}}(N,u_N)=\sum_{\{i\}\subseteq T\subseteq N}p^{|\{i\}|}_{|T|}\Delta_{u_N}(T)=p^{|\{i\}|}_{|N|}=p^1_t=1/t.
\end{align*}
\end{proof}

\subsection*{Proof of \cref{proposition:axioms_intersection_s}}
\begin{proof}
$\Rightarrow:$
The group value $|S|\cdot IS_S$ is a synergy semivalue, so from \cref{theorem:synergistic_semivalues}it satisfies Linearity, Symmetry, Dummifying Player, Weak Monotonicity and Null Player Out. 
First we show it satisfies Singleton-Efficiency. 
For every game $(N,v)$
\[
\sum_{i\in N}|\{i\}|\cdot IS_{\{i\}}(N,v)=\sum_{i\in N}SV_i(N,v)=v(N).
\]
It remains to show it satisfies Group Proportionality.

For every game $(N,v)$ and non-empty coalitions $S,Q\subseteq N$ using \cref{equation:intersection_shapley} we obtain
\[
\frac{|S|\cdot IS_S(N,u_N)}{|S|}=\sum_{S\subseteq T\subseteq N}\frac{\Delta_{u_N}(T)}{|T|}=\frac{1}{|N|}=\sum_{Q\subseteq T\subseteq N}\frac{\Delta_{u_N}(T)}{|T|}=\frac{|Q|\cdot IS_Q(N,u_N)}{|Q|}
\]
$\Leftarrow:$
Let $\varphi$ be a group value satisfying these axioms. Since it satisfies Linearity, Symmetry, Dummifying Player, Weak Monotonicity and Null Player Out it is a synergy semivalue from \cref{theorem:synergistic_semivalues}, and let its weights be $(p^q_t)_{p,q\in\mathbb{N}, q\leq t}$. 
Since $\varphi$ satisfies Singleton-Efficiency, from \cref{lemma:efficiency_synergy} it holds that $p^1_t=1/t$. 
Now we will show that all weights satisfy $p^q_t=q/t$, proving $\varphi_S=|S|\cdot IS_S$ from \cref{theorem:synergistic_semivalues}. For every $q,t\in \mathbb{N}$ such that $q\leq t$ choose sets of players $S,N$ and player $i\in N$ such that $S\subseteq N,|S|=q, |N|=t$. Then using Group Proportionality we get
\begin{align*}
p^q_t
&=p^{|S|}_{|N|}=\sum_{S\subseteq T\subseteq N}p^{|S|}_{|T|}\Delta_{u_N}(T)=\varphi_S(N,u_N)\\
&=|S|\cdot\varphi_{\{i\}}(N,u_N)=|S|\cdot \sum_{\{i\}\subseteq T\subseteq N}p^{|\{i\}|}_{|T|}\Delta_{u_N}(T)=|S|\cdot p^{|\{i\}|}_{|N|}=q\cdot p^1_t=q/t.
\end{align*}
\end{proof}

\subsection*{Proof of \cref{proposition:union_intersection_pair}}
\begin{proof}
Since $\varphi$ and $\hat{\varphi}$ are corresponding group weak consistent semivalue and synergy semivalue, they share the same weights $(p^q_t)_{q,t\in\mathbb{N},q\leq t}$. They extend the Shapley value, so from \cref{proposition:efficiency_equivalence} it holds that $p^1_t=1/t$. For every $t\in \mathbb{N}, t\geq 2$ take any player sets $S, N$ and players $i,j\in S$, such that $|S|=t$, $S\subseteq N$ and $i\not=j$. Then
\begin{align*}
p^2_t
&=\frac{1}{2}\cdot 2p^{|\{i,j\}\cap S|}_{|S|}=\frac{1}{2}(\sum_{T\subseteq N, \{i,j\}\cap T\not=\emptyset}p^{|\{i,j\}\cap T|}_{|T|}\Delta_{u_S}(T)+\sum_{\{i,j\}\subseteq T\subseteq N}p^{|\{i,j\}\cap T|}_{|T|}\Delta_{u_S}(T))\\
&=\frac{1}{2}(\varphi_{\{i,j\}}(N,u_S)+\hat{\varphi}_{\{i,j\}}(N,u_S))=\frac{1}{2}(SV_i(N,u_S)+SV_j(N,u_S))=\frac{1}{2}\cdot \frac{2}{|S|}=\frac{1}{t}.
\end{align*}
Now we get that for every game $(N,v)$ and players $i,j\in N, i\not=j$
\[
\varphi_{\{i,j\}}(N,v)=\sum_{T\subseteq N, \{i,j\}\cap T\not=\emptyset}p^{|\{i,j\}\cap T|}_{|T|}\Delta_v(T)=\sum_{T\subseteq N, \{i,j\}\cap T\not=\emptyset}\frac{\Delta_v(T)}{|T|}=US_{\{i,j\}}(N,v)
\]
and
\[
\hat{\varphi}_{\{i,j\}}(N,v)=\sum_{\{i,j\}\subseteq T\subseteq N}p^{|\{i,j\}\cap T|}_{|T|}\Delta_v(T)=\sum_{\{i,j\}\subseteq T\subseteq N}\frac{\Delta_v(T)}{|T|}=IS_{\{i,j\}}(N,v),
\]
concluding the proof.
\end{proof}

\subsection*{Proof of \cref{theorem:usisrelation}}
\begin{proof}
Using the dividend formula for Intersection Shapley value, we get
\begin{align*}
\sum_{\emptyset \subsetneq T \subseteq S} (-1)^{|T|-1} IS_T(N,v) & = \sum_{T: \emptyset\subsetneq T\subseteq S}(-1)^{|T|-1}\sum_{Q: T\subseteq Q}\frac{\Delta_v(Q)}{|Q|} \\
& = \sum_{Q: Q\cap S\not=\emptyset}\frac{\Delta_v(Q)}{|Q|}\sum_{T:\emptyset\subsetneq T\subseteq Q}(-1)^{|T|-1}.
\end{align*}
Note that:
\[
\sum_{T:\emptyset\subsetneq T\subseteq Q}(-1)^{|T|-1} = - ((-1+1)^{|Q|} - 1) = 1.
\]
Combining both equations, we obtain
\[
\sum_{\emptyset \subsetneq T \subseteq S} (-1)^{|T|-1} IS_T(N,v) = \sum_{Q:Q\cap S\not=\emptyset}\frac{\Delta_v(Q)}{|Q|} \cdot 1 = US_S(N, v),
\]
which concludes the proof.
\end{proof}

\end{document}